\pgfplotsset{compat=newest}
\def\myarabic#1{\normalfont(\roman{#1})}
\newlist{theoremlist}{enumerate}{1}
\setlist[theoremlist]{label=\myarabic{theoremlisti},ref={\myarabic{theoremlisti}},itemindent=0pt,labelindent=0pt,
	leftmargin=*,noitemsep}
\renewcommand{\p@theoremlisti}{\perh@ps{\thetheorem}}
\protected\def\perh@ps#1#2{\textup{#1#2}}
\newcommand{\itemrefperh@ps}[2]{\textup{#2}}
\newcommand{\itemref}[1]{\begingroup\let\perh@ps\itemrefperh@ps\ref{#1}\endgroup}
\newcommand\restr[2]{{
  \left.\kern-\nulldelimiterspace 
  #1 
  \vphantom{\big|} 
  \right|_{#2} 
  }}
\newtheorem{theorem}{Theorem}[section]
\newtheorem{lemma}[theorem]{Lemma}
\newtheorem{proposition}[theorem]{Proposition}
\theoremstyle{definition}
\newtheorem{remark}[theorem]{Remark}
\theoremstyle{definition}
\theoremstyle{definition}
\newtheorem{problem}[theorem]{Problem}
\theoremstyle{definition}
\newtheorem{example}[theorem]{Example}
\newcommand{\bp}{\begin{problem}}
\newcommand{\ep}{\end{problem}}
\newcommand{\bs}{\begin{proof}[Solution]}
\newcommand{\es}{\end{proof}}
\newcommand{\Z}{\mathbb{Z}}
\newcommand{\C}{\mathbb{C}}
\newcommand{\R}{\mathbb{R}}
\newcommand{\wt}{\mathrm{wt}}
\newcommand{\w}{\mathrm{w}}
\newcommand{\ra}{\rightarrow}
\crefname{figure}{Figure}{Figures}
\def\figref#1(#2){Figure~\hyperref[#1]{\ref*{#1}(#2)}}
\definecolor{calpolypomonagreen}{rgb}{0, 0.6, 0.2}
\newcounter{todofigure}
\tikzset{qvert/.style={draw,black,circle,fill=gray,minimum size=5pt,inner sep=0pt}  } 
\tikzset{bvert/.style={draw,circle,fill=black,minimum size=5pt,inner sep=0pt}  }  
\tikzset{gbvert/.style={draw, gray, circle,fill=gray,minimum size=5pt,inner sep=0pt}  } 
\tikzset{gvert/.style={draw,gray,circle,fill=white,minimum size=5pt,inner sep=0pt}  } 
\tikzset{wvert/.style={draw,circle,fill=white,minimum size=5pt,inner sep=0pt}  } 
\tikzset{fvert/.style={text=MidnightBlue}  } 
\tikzset{sqvert/.style={draw,black,rectangle,fill=black,minimum size=5pt,inner sep=0pt}  } 
\tikzset{lvert/.style={draw,circle,fill=black,minimum size=4pt,inner sep=0pt}  }  
\tikzset{nvert/.style={draw,circle,fill=black,minimum size=3pt,inner sep=0pt}  }
\def\ep{0.16}
\def\rc{7}
\def\lw{1.1pt}
\def\T{\mathbb T}
\def\Rpos{\mathbb R_{>0}}
\def\b{\mathrm b}
\def\bw{\mathrm w}
\numberwithin{equation}{section}
\def\res{\operatorname{Res}}
\begin{document}
	\numberwithin{equation}{section}
	
	\title{Spectral transform for the Ising model}
	\author{Terrence George}
	\address{Department of Mathematics, University of California, Los Angeles, CA 90095, USA}
	\email{{\href{mailto:tegeorge@math.ucla.edu}{tegeorge@math.ucla.edu}}}
	\date{\today}
	
	\begin{abstract}
		We prove a correspondence between Ising models in a torus and the algebro-geometric data of a Harnack curve with a certain symmetry and a point in the real part of its Prym variety, extending the correspondence between dimer models and Harnack curves and their Jacobians due to Kenyon and Okounkov. 
	\end{abstract}
	
	\maketitle

\section{Introduction}	
	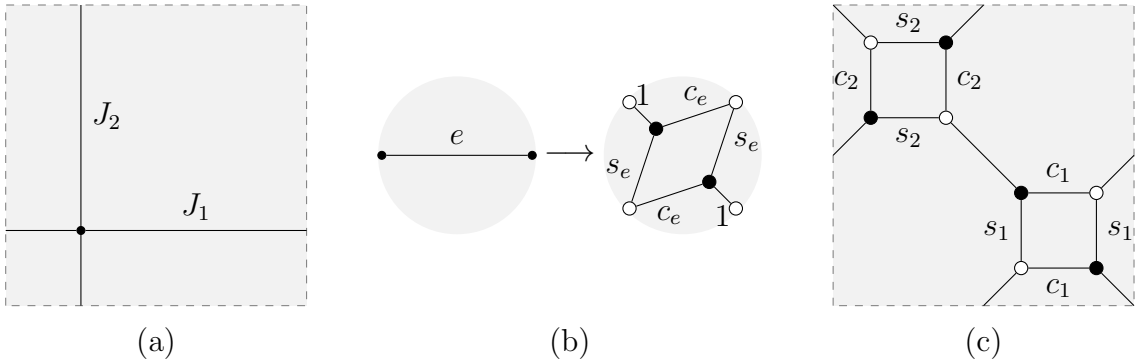
\begin{figure}
		\begin{tikzpicture}[scale=0.5]
			\def\ep{0.3}
			\def\xsh{1}
			\begin{scope}[shift={(-\xsh,0)},rotate=0]
	
			  \fill[black!5] (0,0) rectangle (8,8);
			  \draw[gray,dashed] (0,0) rectangle (8,8);
			
			  \coordinate[nvert] (n) at (2,2);
			  \draw[] (n)--node[above]{$J_{1}$} (8,2);
			  \draw[](0,2) --(n);
			  \draw[] (2,0) -- (n)
			  (n) -- node[right]{$J_{2}$} (2,8);

			  \node[](no) at (4,-1) {(a)};
			\end{scope}

			\def\sh{14}
			\def\ysh{4}
			\def\ep{0.3}
			\node[](no) at (\sh,\ysh){$\longrightarrow$};
			\begin{scope}[shift={(\sh+3,\ysh)},rotate=90+45]
			  \def\r{2};
			  \fill[black!5] (0,0) circle (1.05*\r cm);
			  \coordinate[wvert] (n1) at (0:\r);
			  \coordinate[wvert] (n2) at (0+90:\r);
			  \coordinate[wvert] (n3) at (0+180:\r);
			  \coordinate[wvert] (n4) at (0+270:\r);
			  
			  \coordinate[bvert] (b1) at (0:0.5*\r);
			  \coordinate[bvert] (b2) at (0+180:0.5*\r);
			  
			  \draw[-]
			  (n1) -- node[above]{$1$} (b1) -- node[left]{$s_e$} (n2)-- node[below]{$c_e$} (b2)-- node[right]{$s_e$}(n4) -- node[above]{$c_e$} (b1)
			  (b2)--node[below]{$1$}(n3)
			  ;

			\end{scope}
			\node[](no) at (14,-1) {(b)};
			
			\begin{scope}[shift={(\sh-3,\ysh)},rotate=0]
			  \def\r{2};
			  
			  \fill[black!5] (0,0) circle (1.05*\r cm);
			  \coordinate[nvert] (n1) at (0:\r);
			  \coordinate[nvert] (n2) at (0+180:\r);  
	\draw (n1) -- node[above]{$e$} (n2);

			\end{scope}

			\begin{scope}[shift={(20+\xsh,0)},rotate=0]

				\fill[black!5] (0,0) rectangle (8,8);
				\draw[gray,dashed] (0,0) rectangle (8,8);

				\coordinate[bvert] (b1) at (1,5);
				\coordinate[bvert] (b2) at (1+2,5+2);
				\coordinate[bvert] (b3) at (5,3);
				\coordinate[bvert] (b4) at (7,1);
	  
				\coordinate[wvert] (w1) at (1,7);
				\coordinate[wvert] (w2) at (1+2,5);
				\coordinate[wvert] (w3) at (5,1);
				\coordinate[wvert] (w4) at (7,3);
	  
				\draw[] (b1) -- (0,4) 
				(b1) --node[below]{$s_2$} (w2)
				(b1) --node[left]{$c_2$} (w1);
	  
				\draw[] (b2) --node[above]{$s_2$} (w1)
				(b2) -- (4,8)
				(b2) --node[right]{$c_2$} (w2);
	  
				\draw[] (b3) --node[left]{$s_1$} (w3)
				(b3) -- (w2)
				(b3) --node[above]{$c_1$} (w4) ;
	  
				\draw[] (b4) --node[below]{$c_1$} (w3)
				(b4) -- (8,0)
				(b4) --node[right]{$s_1$} (w4) ;
	  
				\draw[] (w4) -- (8,4)
				(0,8) -- (w1)
				(4,0) -- (w3);
				\node[](no) at (4,-1) {(c)};
			   \end{scope}

		  \end{tikzpicture}
		  \caption{(a) An Ising model $(G,J)$ in a torus $\T$, (b) the mapping from $(G,J)$ to $(G^\square,[\wt^\square])$ and (c) the dimer model $(G^\square,[\wt^\square])$.} \label{fig:isingintro}
	\end{figure}

An \textit{Ising model} in a torus $\T$ is a pair $(G,J)$ where $G=(V,E,F)$ is a graph embedded in $\T$ such that every face of $G$ is a topological disk and $J:E(G) \ra \R_{>0}$ is a function called the \textit{coupling constant}. 
A \textit{dimer model} in $\T$ is a pair $(\Gamma,[\wt])$ where $\Gamma=(B \sqcup W, E,F)$ is a bipartite graph in $\T$ and $[\wt]$ is a function $\wt:E(\Gamma) \ra \Rpos$ called \textit{edge weight} defined modulo a certain gauge equivalence. 

Following \cite{FW,Dubedat,CdT2}, we can associate to an Ising model $(G,J)$ a dimer model $(G^\square,[\wt^\square])$ as follows. We define two functions $s, c : E \ra (0,1)$ by 
\[
s_e := \operatorname{sech}(2J_e) \text{ and }c_e:= \operatorname{tanh}(2J_e) 
\]
for every edge $e \in E$ and replace $e$ with the bipartite graph shown in Figure~\ref{fig:isingintro}(b). 

\begin{figure}
	\begin{tikzpicture}[scale=0.6]
		\def\ep{0.3}
		
		\begin{scope}[shift={(4,0)},rotate=45]
			\def\r{2};
			\fill[black!5] (0,0) circle (1.05*\r cm);
			\coordinate[wvert] (n1) at (0:\r);
			\coordinate[wvert] (n2) at (0+90:\r);
			\coordinate[wvert] (n3) at (0+180:\r);
			\coordinate[wvert] (n4) at (0+270:\r);
			
			\coordinate[bvert] (b1) at (0:0.5*\r);
			\coordinate[bvert] (b2) at (0+180:0.5*\r);
			
			\draw[-]
			(n1) -- node[above]{$1$} (b1) -- node[above]{$ \frac{d}{ac+bd}$} (n2)-- node[left]{$ \frac{c}{ac+bd}$} (b2)-- node[below]{$\frac{b}{ac+bd}$}(n4) -- node[right]{$\frac{a}{ac+bd}$} (b1)
			(b2)--node[below]{$1$}(n3)
			;
		 
		\end{scope}
		
		\begin{scope}[shift={(-3,0)},rotate=90+45]
			\def\r{2};
			\fill[black!5] (0,0) circle (1.05*\r cm);
			\coordinate[wvert] (n1) at (0:\r);
			\coordinate[wvert] (n2) at (0+90:\r);
			\coordinate[wvert] (n3) at (0+180:\r);
			\coordinate[wvert] (n4) at (0+270:\r);
			
			\coordinate[bvert] (b1) at (0:0.5*\r);
			\coordinate[bvert] (b2) at (0+180:0.5*\r);
			
			\draw[-]
			(n1) -- node[above]{$1$} (b1) -- node[left]{$ a$} (n2)-- node[below]{$ d$} (b2)-- node[right]{$ c$}(n4) -- node[above]{$ b$} (b1)
			(b2)--node[below]{$1$}(n3)
			;
			\coordinate[] (t1) at (15:\r);
			\coordinate[] (t2) at (120-45:\r);
			\coordinate[] (t3) at (150-45:\r);
			\coordinate[] (t4) at (210-45:\r);
			\coordinate[] (t5) at (240-45:\r);
			\coordinate[] (t6) at (300-45:\r);
			\coordinate[] (t7) at (330-45:\r);
			\coordinate[] (t8) at (30-45:\r);

		  \end{scope}
	
	\draw[] (7,-2) -- (7,2);
	
	\node[](no) at (0.5,0) {$\longleftrightarrow$};		
	\node[](no) at (0.5,-3) {(a)};	
	\node[](no) at (13+0.5,-3) {(b)};	
			\begin{scope}[shift={(12-2,0)}
				,rotate=-45
				]
				\def\r{2};
				\fill[black!5] (0,0) circle (\r cm);

				\coordinate[] (t1) at (15:\r);
				\coordinate[] (t2) at (120-45:\r);
				\coordinate[] (t3) at (150-45:\r);
				\coordinate[] (t4) at (210-45:\r);
				\coordinate[] (t5) at (240-45:\r);
				\coordinate[] (t6) at (300-45:\r);
				\coordinate[] (t7) at (330-45:\r);
				\coordinate[] (t8) at (30-45:\r);
				
				\coordinate[] (m1) at (0:1);
				\coordinate[] (m2) at (0+90:1);
				\coordinate[] (m3) at (0+180:1);
				\coordinate[] (m4) at (0+270:1);

				\coordinate[wvert] (n1) at (0:\r);
				\coordinate[wvert] (n2) at (0+90:\r);
				\coordinate[wvert] (n3) at (0+180:\r);
				\coordinate[wvert] (n4) at (0+270:\r);

				\coordinate[bvert] (n5) at (0,0);
				\draw[] (n5) edge node[above]{$b$} (n2) edge node[above]{$a$} (n3) edge node[below]{$d$}  (n4) edge node[below]{$c$}  (n1);
				\end{scope}
				\node[](no) at (13-0.5+1,0) {$\longleftrightarrow$};
					\begin{scope}[shift={(20-2-1,0)},rotate=-45]
				\def\r{2};
				\fill[black!5] (0,0) circle (\r cm);
				\coordinate[wvert] (n1) at (0:\r);
				\coordinate[wvert] (n2) at (0+90:\r);
				\coordinate[wvert] (n3) at (0+180:\r);
				\coordinate[wvert] (n4) at (0+270:\r);
				\coordinate[wvert] (n5) at (0,0);
				
				\coordinate[bvert] (b1) at (-0.5,0.5);
				\coordinate[bvert] (b2) at (0.5,-0.5);
				
				\draw[-]
								(n1)--node[below]{$c$}(b2)--node[below]{$d$}(n4)
								(n2)--node[above]{$b$}(b1)--node[above]{$a$}(n3)
								(b1)--node[left]{$1$}(n5)--node[left]{$1$}(b2)
								;

				\end{scope}  
	  \end{tikzpicture}
	  \caption{(a) The square move and (b) the contraction-uncontraction move. Using gauge equivalence, we can assume that the original weight is as shown on the right. Then the new weight is as shown on the left.} \label{fig:squareintro}
\end{figure}
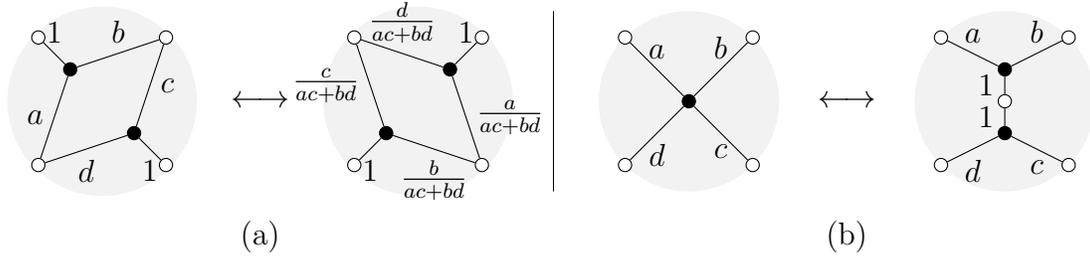

There are two local moves on bipartite torus graphs (Figure~\ref{fig:squareintro}). If we apply square moves at all the square faces of $G^\square$ corresponding to edges of $G$, the resulting graph $\overline{G^\square}$ is the same as $G^\square$ but with all vertices having opposite colors. Let $\mu([\wt])$ denote the resulting weight on $\overline{G^\square}$. Let $[\overline \wt]$ denote the weight where the weight of every edge is the same as that of $G^\square$. 

Our first result is the following characterization of the subset of dimer models that corresponds to Ising models.

\begin{theorem}
A dimer model $(G^\square,[\wt])$ arises from an Ising model if and only if $[\overline \wt] = \mu([\wt])$.
\end{theorem}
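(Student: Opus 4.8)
The plan is to prove both implications by reducing the global statement to a local computation at each of the square faces of $G^\square$ that correspond to edges of $G$, using the urban-renewal (square-move) formula of Figure~\ref{fig:squareintro}(a). The essential algebraic input is the Pythagorean identity $\operatorname{sech}^2(2J_e)+\tanh^2(2J_e)=1$, that is $s_e^2+c_e^2=1$; under the dictionary of Figure~\ref{fig:squareintro}(a) this is exactly the statement that the square-move denominator $ac+bd$ equals $1$ for the Ising weights $(a,b,c,d)=(s_e,c_e,s_e,c_e)$. I will also use that the combinatorial isomorphism $\overline{G^\square}\cong G^\square$ implicit in the definition of $\mu$ is realized by rotating each such square face by one step, so that the edge in position $b_1n_2$ of $\overline{G^\square}$ is identified with the edge in position $n_2b_2$ of $G^\square$, and similarly around the square.

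For the forward implication, I would substitute the Ising weights into the square-move formula. Since $ac+bd=s_e^2+c_e^2=1$, the four new weights are the \emph{unnormalized} cyclic shifts $(c_e,s_e,c_e,s_e)$ of the old weights $(s_e,c_e,s_e,c_e)$, while the two weight-$1$ bridge edges of the gadget are left unchanged by the move. Under the one-step rotation identifying $\overline{G^\square}$ with $G^\square$ these shifted weights agree edge-by-edge with the original weights $\wt$, so $\mu([\wt])=[\overline{\wt}]$. As the square faces are pairwise disjoint and separated by the weight-$1$ bridges, this is a purely local check repeated at every Ising face, and the holonomies around the two cycles of $\T$ are manifestly unaffected; hence the equality of weight classes holds.

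For the reverse implication, I would start from the equality $[\overline{\wt}]=\mu([\wt])$ and exhibit the gauge transformation $g$ that witnesses it. Writing the equality at a single square face gives, on the four square edges, the multiplicative conditions $g(b_1)g(n_2)=d/(b\Delta)$, $g(n_2)g(b_2)=1/\Delta$, $g(b_2)g(n_4)=b/(d\Delta)$, $g(n_4)g(b_1)=1/\Delta$, where $\Delta=ac+bd$; these are solvable with one free parameter per face, while the requirement that each weight-$1$ bridge remain weight $1$ forces $g$-ratio $1$ on the bridges and thereby couples the free parameters of adjacent gadgets. Working in the gadget gauge in which all bridges have weight $1$, this coupled system is consistent precisely when every denominator $\Delta_e$ can be simultaneously normalized to $1$, i.e.\ when there is a gauge in which each gadget is the standard Ising gadget with $s_e^2+c_e^2=1$. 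Given such $s_e,c_e\in(0,1)$ I then set $J_e=\tfrac12\operatorname{arctanh}(c_e)$ and reverse the construction $G\mapsto G^\square$ to recover an Ising model $(G,J)$ mapping to $(G^\square,[\wt])$.

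The main obstacle is the gauge bookkeeping in the reverse direction: because $\mu([\wt])=[\overline{\wt}]$ is an equality of classes, one cannot compare weights edge-by-edge, and indeed the face weight $ac/bd$ already matches automatically under the one-step rotation, so the Ising constraint is not visible at a single face. The crux is therefore to show that threading the gauge $g$ around the cycles of $\T$ through the shared connector vertices and the fixed weight-$1$ bridges of Figure~\ref{fig:isingintro}(c) forces each local denominator $\Delta_e$ to $1$ (up to the single overall gauge), rather than merely forcing a product of the $\Delta_e$ around cycles to be $1$. Establishing that this global consistency condition is equivalent to the per-edge relation $s_e^2+c_e^2=1$, and hence to membership in the image of the Ising-to-dimer map, is the heart of the argument.
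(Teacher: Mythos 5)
Your forward implication is correct and is essentially the paper's own argument: with $(a,b,c,d)=(s_e,c_e,s_e,c_e)$ the urban-renewal denominator is $ac+bd=s_e^2+c_e^2=1$, so each square move just interchanges $s_e$ and $c_e$ around the square, which is exactly the color-reversed weight. The problem is the reverse implication, where your write-up sets up a gauge-theoretic framework but then explicitly defers the decisive step: you state that the crux is to show that threading the gauge $g$ around the cycles of $\T$ forces all the denominators $\Delta_e$ to be simultaneously normalizable to $1$, and that establishing this ``is the heart of the argument.'' That step is never carried out, so the proof is incomplete exactly where the theorem has content. As you yourself observe, the constraint is invisible at a single face (the face coordinate transforms as $X_f\mapsto X_f^{-1}$ under both $\mu$ and color reversal, so it matches automatically), hence the entire burden falls on a global statement about the non-square faces and the two torus holonomies, which your proposal names but does not prove. (A smaller issue: gauge acts by ratios, $\wt'(e)=g(\b)^{-1}\wt(e)g(\bw)$, whereas your face equations involve products such as $g(b_1)g(n_2)$; this is fixable, but it suggests the bookkeeping has not actually been set up.)

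The paper closes this gap by working entirely in gauge-invariant coordinates instead of with gauge functions. The key point (Lemma~\ref{lem::faces_determine}) is that for any cycle $\gamma$ in $G^\square$ one has $\overline\gamma = s_*(\gamma)+\sum_f a_f\, s_*(f)$, summing over square faces $f$, by \eqref{eq:fmut}, while color reversal gives $X_{\overline\gamma}([\overline\wt])=X_\gamma([\wt])^{-1}$; combining these with the mutation rule \eqref{eq:mu} turns the hypothesis $\mu([\wt])=[\overline\wt]$ into the closed-form identity \eqref{eq:Xpos},
\[
X_\gamma([\wt])^2=\prod_{\text{square faces }f}X_f([\wt])^{-a_f}\bigl(1+X_f([\wt])\bigr)^{\langle\gamma,f\rangle},
\]
and positivity selects the square root. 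Thus \emph{every} coordinate of $[\wt]$ --- including the torus holonomies, which are precisely what your ``threading'' step was supposed to control --- is determined by the square-face coordinates $(X_f)$. The theorem then follows in two lines: build the Ising weight $\wt^\square$ with $s_e=\sqrt{X_f/(1+X_f)}$ and $c_e=\sqrt{1/(1+X_f)}$; it satisfies the same hypothesis by the easy direction and has the same square-face coordinates, hence $[\wt^\square]=[\wt]$ by the lemma. If you want to rescue your approach, you need to prove an analogue of this lemma; the transformation rules \eqref{eq:fmut} and \eqref{eq:mu} for cycles and their coordinates under the square moves are the missing technical input.
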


One direction is easy. If $\wt^\square$ is as in Figure~\ref{fig:isingintro}(b) then $a=c=s_e$ and $b=d=c_e$ in Figure~\ref{fig:squareintro}(a). Since $ac+bd = s_e^2+c_e^2=1$, we get $[\overline{ \wt^\square}] = \mu([\wt^\square])$. To prove the other direction, we have to study the conditions imposed by setting $[\overline \wt] = \mu([\wt])$ on the gauge invariant $X$ coordinates on the set of dimer models. This is done in Section~\ref{sec:car_weight}.

Kenyon and Okounkov \cite{KO} showed that for every (minimal) bipartite graph $\Gamma$ in $\T$ there is a bijection called the \textit{spectral transform} between the following sets of data.
\begin{enumerate}
	\item[\bf{Data A.}] 
	\begin{enumerate}
		\item $(\Gamma,[\wt])$ a dimer model in $\T$.
		\item $[\kappa]$ a Kasteleyn sign on $\Gamma$ (equivalent to a choice of one of the four elements of $H^1(\T,\{\pm 1\})$). 
	\end{enumerate}
	\item[\bf{Data B.}] \begin{enumerate}
		\item $C \subset (\C^\times)^2$ a \textit{spectral curve} which is a real algebraic curve of a special type called a Harnack curve.
		\item $D$ a divisor on $C$ with one point on each compact oval of $C$ (i.e., connected component of the real locus $C(\R)$) called a \textit{standard divisor}. Such divisors form a component of the real part of the Jacobian variety $\operatorname{Jac}(C)$ of $C$.
		\item An ordering of the points at toric infinity of $C$.
	\end{enumerate}  
\end{enumerate}

Our second result is the following proved in Section~\ref{sec:isingspecdata}.

\begin{theorem}
	For every (minimal) graph $G$ in $\T$, the spectral transform restricts to a bijection between the following sets of data.
\begin{enumerate}
	\item[\bf{Data A.}] 
	\begin{enumerate}
		\item $(G,J)$ an Ising model in $\T$.
		\item $[\kappa]$ a Kasteleyn sign on $G^\square$. 
	\end{enumerate}
	\item[\bf{Data B.}] \begin{enumerate}
		\item $C$ a Harnack curve that is invariant under the involution $\sigma$ on $(\C^\times)^2$ defined by $(z,w) \mapsto (z^{-1},w^{-1})$. Whenever a curve $C$ carries an involution, it defines a linear subvariety of $\operatorname{Jac}(C)$ called the \textit{Prym variety} of $C$.
		\item $D$ a standard divisor in the Prym variety of $C$.	
		\item An ordering of the points at toric infinity of $C$ that is invariant under $\sigma$.
	\end{enumerate}  
\end{enumerate}
\end{theorem}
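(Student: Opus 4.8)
The plan is to deduce the statement from the Kenyon--Okounkov spectral transform $\Phi$---the bijection between Data~A and Data~B for arbitrary minimal bipartite graphs---together with the characterization of Ising dimer models from the first theorem. For the fixed graph $G$ I take $\Gamma = G^\square$, so that $\Phi$ is already a bijection onto all Harnack curves equipped with a standard divisor and an ordering of their points at toric infinity; the task is then to identify, on the Data~B side, the image of the subset cut out by the Ising condition $[\overline{\wt}] = \mu([\wt])$. The whole argument therefore reduces to understanding how the two operations appearing in that condition---recoloring $[\wt] \mapsto [\overline\wt]$ and the square-move composite $[\wt]\mapsto \mu([\wt])$---act on spectral data.

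I would analyze the two operations separately. A square move at a face is a spectral-data-preserving local transformation: it leaves the spectral curve $C$ and the divisor class $D\in\operatorname{Jac}(C)$ unchanged and only relabels the combinatorial data. Since $\mu$ is the composite of square moves at all the square faces, carrying $(G^\square,[\wt])$ to $(\overline{G^\square},\mu([\wt]))$, we get $\Phi(\overline{G^\square},\mu([\wt])) = \Phi(G^\square,[\wt])$ after the natural identification of $\overline{G^\square}$ with $G^\square$; write the common value $(C,D,\nu)$. Recoloring, on the other hand, replaces the Kasteleyn operator of $(G^\square,[\wt])$ by its adjoint, and because interchanging the roles of black and white reverses the homology classes recorded by the magnetic variables, one finds $\det K_{\overline{G^\square}}(z,w) = \det K_{G^\square}(z^{-1},w^{-1})$ up to a monomial. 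Hence the spectral curve of $(\overline{G^\square},[\overline\wt])$ is $\sigma(C)$, the zig-zag paths reverse so the points at toric infinity and their ordering are transported by $\sigma$, and the transpose character of the operation sends the divisor to $-\sigma^*D$ in $\operatorname{Jac}(C)$. Thus $\Phi(\overline{G^\square},[\overline\wt]) = (\sigma(C),\,-\sigma^*D,\,\sigma(\nu))$. I would also record here how the Kasteleyn sign $[\kappa]$ is transported, since it is part of Data~A.

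Combining the two computations with the injectivity of $\Phi$ on dimer models on $\overline{G^\square}$, the Ising condition $[\overline\wt] = \mu([\wt])$ becomes the simultaneous equalities $\sigma(C) = C$, $\;-\sigma^*D = D$, and $\sigma(\nu)=\nu$. The first says exactly that $C$ is a $\sigma$-invariant Harnack curve, the third is the $\sigma$-invariant ordering of Data~B, and the middle one is the Prym condition. Running these equivalences backward through $\Phi$ and the first theorem supplies the converse: every $\sigma$-invariant Harnack curve carrying a standard Prym divisor and a $\sigma$-invariant ordering has a $\Phi$-preimage satisfying $[\overline\wt]=\mu([\wt])$, hence arising from an Ising model. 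This exhibits the desired restricted bijection.

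The main obstacle is the middle condition. I must show that, among standard divisors on a $\sigma$-invariant Harnack curve, the fixed locus $-\sigma^*D = D$ is precisely the set of standard divisors lying in the Prym variety of $C$; this is the technical heart of Section~\ref{sec:isingspecdata}. It requires analyzing the action of $\sigma$ on the real Jacobian together with complex conjugation, tracking how $\sigma$ permutes the compact ovals (fixing some setwise, exchanging others) and the real points at toric infinity, and verifying that the anti-invariance $\sigma^*D \sim -D$ forced above is equivalent to membership in the Prym once the base point implicit in the standard-divisor normalization is accounted for. Care is also needed to confirm that the square moves act trivially---not merely up to a permutation---on the ordering after $\overline{G^\square}$ is identified with $G^\square$, and that $\sigma$-invariance of $C$ is compatible with, rather than an additional constraint beyond, the Harnack property produced by $\Phi$.
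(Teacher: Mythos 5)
Your overall strategy is the one the paper itself uses: reduce to the weight-level characterization $[\overline{\wt}]=\mu([\wt])$ (Theorem~\ref{thm:main1}), compute how recoloring and the square-move composite $\mu$ act on spectral data, and conclude by injectivity of the Kenyon--Okounkov transform (Theorem~\ref{thm:komain}). However, there is a genuine gap in your divisor bookkeeping, and it is not a deferrable technicality but the substance of Section~\ref{sec:isingspecdata}. Spectral data is anchored at a chosen vertex, so your two computations must be compared at the \emph{same} vertex of $\overline{G^\square}$. Under $\mu$ the white vertex $\bw$ of $G^\square$ corresponds to the white vertex $\overline{\b}$ of $\overline{G^\square}$, where $\b$ is the black neighbor of $\bw$ across the pendant edge (not to $\overline{\bw}$, which is black); at that vertex Theorem~\ref{thm:mutation_spec} gives $\Phi(\overline{G^\square},\mu([\wt]))=(C,D_{\bw},\nu)$. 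Recoloring, by contrast, introduces no negation whatsoever: by Proposition~\ref{prop:colorchange} the divisor attached to the recolored vertex $\overline{\mathrm{v}}$ is the literal image $\sigma(D_{\mathrm{v}})$, so at $\overline{\b}$ one gets $\sigma(D_{\b})$, not $-\sigma^*D$. Your formula $\Phi(\overline{G^\square},[\overline{\wt}])=(\sigma(C),-\sigma^*D,\sigma(\nu))$ silently combines recoloring with a change of reference vertex from white to black, and that change is governed by the relation $D_{\b}+D_{\bw}\sim K_C+\nu_{G^\square}(\alpha)+\nu_{G^\square}(\overline{\alpha})$, which comes from Theorem~\ref{thm:mutation_spec}(1)(b) together with the discrete Abel map identity $\mathbf{d}_{G^\square}(\b)-\mathbf{d}_{G^\square}(\bw)=\nu_{G^\square}(\alpha)+\nu_{G^\square}(\overline{\alpha})$. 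This ingredient appears nowhere in your proposal.

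Without it, your middle condition $-\sigma^*D=D$ has no definite content: as divisors, $D+\sigma(D)\sim 0$ is impossible for an effective divisor of degree $g\geq 1$, and any meaning you assign to it inside $\operatorname{Jac}(C)$ depends exactly on the translation constant you have not computed. The paper's condition is $D+\sigma(D)\sim K_C+\nu_{G^\square}(\alpha)+\nu_{G^\square}(\overline{\alpha})$, with the specific constant produced by the vertex tracking above; it is this constant that makes Data~B(b) mean ``standard divisor in the Prym variety'' (that identification is a citation to \cite[Section 3]{MvM} in the paper, not the technical heart you anticipated --- the heart is precisely the bookkeeping you deferred). Once you (i) fix the vertex correspondence $\bw\mapsto\overline{\b}$, (ii) replace your recoloring formula by $D_{\overline{\mathrm{v}}}=\sigma(D_{\mathrm{v}})$, (iii) insert the black--white divisor relation, and (iv) verify $\mu([\kappa])=[\overline{\kappa}]$ (Figure~\ref{fig:kas}), which you flag but never check, your chain of equivalences becomes the paper's proof: the Ising condition is equivalent to $\sigma(C)=C$, $D_{\bw}=\sigma(D_{\b})$, and $\nu_{G^\square}(\overline{\alpha})=\sigma(\nu_{G^\square}(\alpha))$, with the middle equality converted to the Prym condition by (iii).
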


This correspondence is interesting for several reasons. Firstly, it generalizes the $Z$-invariant Ising models of Baxter~\cite{Baxter1,Baxter2,Baxter3}, which correspond to genus-one spectral curves,	 to arbitrary genus. These Ising models and their genus-zero degenerations (critical or isoradial Ising models) have been extensively studied \cite{CS1,BdT,CS2,Cim,CDC,ZLi,BdTR,Gal}.

Secondly, it completes the following table of correspondences between statistical-mechanical models and algebro-geometric data.

\begin{center}
	\begin{tabularx}{0.9\textwidth} { 
			 |>{\centering\arraybackslash}X 
			 >{\centering\arraybackslash}X 
			 >{\centering\arraybackslash}X|  }
		\hline
		& disk & torus \\
		\hline
		dimer models  & positive Grassmannian \cite{Post}& Harnack curves and standard divisors \cite{KO} \\
		\hline
		electrical networks  & positive Lagrangian Grassmannian \cite{Lam,BGKT,CGS} &  symmetric Harnack curves with a node and standard divisors in Prym varieties \cite{G1} \\
		\hline
		Ising models  & positive orthogonal Grassmannian \cite{HW,Galpy}  & symmetric Harnack curves and standard divisors in Prym varieties (this paper) \\
		\hline
	\end{tabularx}
\end{center}
Thirdly, combining results of Fock \cite{Fock} with compatibility of the Ising Y-$\Delta$ move with the local moves on bipartite graphs \cite{KP}, we get that the discrete dynamical systems on Ising models arising from the Ising Y-$\Delta$ move are linearized on Prym varieties of spectral curves; in this sense, Ising models in $\T$ give rise to integrable systems extending the cluster integrable systems of \cite{GK} constructed from dimer models in $\T$. 

Finally, the spectral transform for the dimer model has been recently used in statistical mechanics to understand limit shapes \cite{BDuits, BB}. We expect that the spectral transform for the Ising model will have similar applications in statistical mechanics.

\subsection*{Acknowledgements.} I first started thinking of the spectral transform of the Ising model in 2019 when I was a graduate student after discussions with Richard Kenyon. I am grateful to C\'{e}dric Boutillier, B\'{e}atrice de Tili\`ere and Pavel Galashin for discussions. 

\section{Background on the Ising model}
In this section, we collect some background on the Ising model. For further background, see \cite{Csur,CCK}.
Let $G=(V,E,F)$ be a graph in $\T$ and let $\mathcal I_G(\Rpos):=\{J:E(G) \ra \Rpos\}$ be the set of Ising models with graph $G$.

\subsection{Zig-zag paths and the Newton polygon}
A \textit{zig-zag path} in $G$ is an oriented path in $G$ that alternately turns maximally left or right at each vertex. Zig-zag paths in $G$ come in pairs with opposite orientations; we denote by $\overline \alpha$ the zig-zag path opposite to the zig-zag path $\alpha$. Let $Z_G$ denote the set of zig-zag paths in $G$. 


Let $\pi: \R^2 \ra \T$ denote the universal cover of $\T$. We say that $G$ is \textit{minimal} if the lift of any zig-zag path to $\R^2$ does not have a self-intersection and the lifts of two zig-zag paths to $\R^2$ share at most one edge. Hereafter, we assume that our graphs are minimal.

Choose a fundamental rectangle $R$ for $\T$ and let $\gamma_z,\gamma_w$ be loops in $\T$ along the sides of $R$ as shown in Figure~\ref{fig:ising_torus}(a). Then $\{[\gamma_z], [\gamma_w]\}$ is a basis for $H_1(\T,\Z)$ identifying $H_1(\T,\Z)$ with $\Z^2$.
A nonzero vector $v \in \Z^2$ is called \textit{primitive} if $v$ is not a multiple of another vector in $\Z^2$, i.e., if $v = \lambda w$ for $ w \in \Z^2$ and $\lambda \in \Z_{>0}$ then $v=w$ and $\lambda = 1$. A convex polygon $N \subset \R^2$ is said to be \textit{integral} if all of its vertices are contained in $\Z^2$. Let $\sigma$ denote the involution $v \mapsto -v$ of $\R^2$. A convex integral polygon $N$ is called \textit{centrally symmetric} if $N$ is invariant under $\sigma$. 

Associated to a minimal $G$ is a centrally-symmetric convex integral polygon $N_G \subset \R^2$ as follows: each zig-zag path $\alpha \in Z_G$ defines a primitive vector given by the homology class $[\alpha] \in H_1(\T,\Z) \cong \Z^2$. There is a unique (modulo translation) convex integral polygon in $\R^2$ whose counterclockwise-oriented boundary consists of the vectors $\{[\alpha] \in \Z^2: \alpha \in Z_G\}$. The translation is fixed by centering at $(0,0)$.

\begin{example}
	Let $(G,J)$ denote the Ising model in Figure~\ref{fig:isingintro}(a). There are four zig-zag paths in $G$. Let $\alpha$ (resp., $\beta$) denote the red (resp., blue) zig-zag path in Figure~\ref{fig:ising_torus}(a). The other two zig-zag paths are $\overline \alpha$ and $\overline \beta$. The homology classes are 
	\[
	[\alpha] = (1,1),\quad [\beta] = (-1,1),\quad [\overline \alpha]=(-1,-1),\quad [\overline \beta] = (1,-1),	
	\]
	so the Newton polygon is the convex integral polygon shown in Figure~\ref{fig:ising_torus}(b). 
\end{example}

\begin{figure}
	\begin{tikzpicture}[scale=0.5]
		\def\ep{0.3}

		\begin{scope}[shift={(10,0)},rotate=0]

			\fill[black!5] (0,0) rectangle (8,8);
			\draw[gray,dashed] (0,0) rectangle (8,8);
		  
			\coordinate[nvert] (n) at (2,2);
			\draw[] (n)--(8,2);
			\draw[](0,2) --(n);
			\draw[] (2,0) -- (n)
			(n) -- (2,8);

			\draw[red,->,line width=\lw,rounded corners=\rc] (2+\ep,0) -- (2+\ep,2-\ep) -- (8,2+\ep);
			\draw[red,->,line width=\lw,rounded corners=\rc] (0,2+\ep) -- (2-\ep,2+\ep) -- (2+\ep,8);

			\draw[blue,->,line width=\lw,rounded corners=\rc]  (8,2-\ep) -- (2+\ep,2+\ep) -- (2-\ep,8);
			\draw[blue,->,line width=\lw,rounded corners=\rc]  (2-\ep,0) -- (2-\ep,2-\ep) -- (0,2-\ep);

			\draw[->,densely dotted] (7.5,0) -- (7.5,8);
			\draw[->,densely dotted] (0,7.5) -- (8,7.5);
			\node (no) at (-0.5,7.5) {$\gamma_z$};
			\node (no) at (7.5,-0.5) {$\gamma_w$};
		  \end{scope}
\node[](no) at (14,-1) {(a)};

		  \begin{scope}[shift={(22,4)},rotate=0,scale=2]
			\coordinate[nvert] (00) at (0,0);
			\coordinate[nvert] (10) at (1,0);
			\coordinate[nvert] (01) at (0,1);
			\coordinate[nvert] (m10) at (-1,0);
			\coordinate[nvert] (0m1) at (0,-1);
\draw[red,->, line width = \lw] (0m1)--(10);
\draw[red,->, line width = \lw] (01)--(m10);
\draw[blue,->, line width = \lw] (10)--(01);
\draw[blue,->, line width = \lw] (m10)--(0m1);

\node[](no) at (0,-2.5) {(b)};
		  \end{scope}
	
	  \end{tikzpicture}
	  \caption{(a) Two of the zig-zag paths of the Ising model $(G,J)$ in Figure~\ref{fig:isingintro}(a) and (b) its Newton polygon. } \label{fig:ising_torus}
\end{figure}
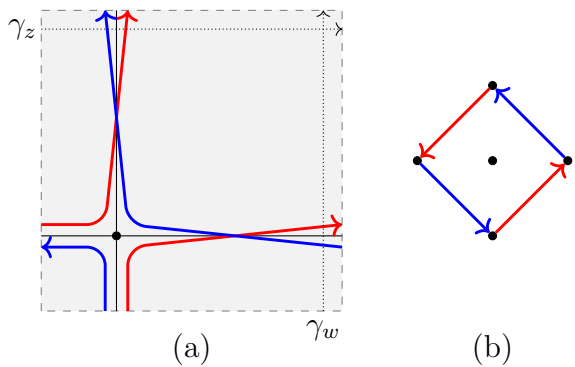	

\subsection{Kramers--Wannier duality}
Let $(G,J)$ be an Ising model in $\T$. Let $G^*$ denote the dual graph of $G$ in $\T$. For $e \in E(G)$, let $e^* \in E(G^*)$ denote the dual edge. Define $J^*:E(G^*) \ra \Rpos$ by the condition
\[
\operatorname{sinh}(2J_{e^*}^*) = \frac{1}{ \operatorname{sinh}(2J_e)}. 	
\]
Then $(G^*,J^*)$ is called the \textit{dual Ising model}. Duality defines a bijection 
\[
\mathcal I_G(\Rpos) \ra \mathcal I_{G^*}(\Rpos).	
\]

It is convenient to introduce the following coordinates. Let $x:E(G) \ra \Rpos$ be defined as $x_e := \exp(2J_e)$. Then $x^*_{e^*}:=\exp(2J^*_{e^*})$ is given by the unique positive solution to 
\[
	x_e+x^*_{e^*}+x_ex^*_{e^*}=1.
\]

\subsection{Y-$\Delta$ move}

\begin{figure}
	\begin{tikzpicture}[scale=0.6]
		\def\ep{0.3}

		\begin{scope}[shift={(-3,0)}, rotate=180]
			\def\r{2};
			\fill[black!5] (0,0) circle (1.05*\r cm);
			\coordinate[nvert] (n1) at (0,0);
			\coordinate[nvert] (n2) at (90:\r);
			\coordinate[nvert] (n3) at (90+120:\r);
			\coordinate[nvert] (n4) at (-30:\r);

			\draw[-]
			(n1) edge node[left]{$a$} (n2) edge node[above]{$c$}(n3) edge node[above]{$b$}(n4)
			;

		  \end{scope}

		\node[](no) at (0.5,0) {$\longleftrightarrow$};	
		
		\begin{scope}[shift={(4,0)}, rotate=180]
			\def\r{2};
			\fill[black!5] (0,0) circle (1.05*\r cm);
			
			\coordinate[nvert] (n2) at (90:\r);
			\coordinate[nvert] (n3) at (90+120:\r);
			\coordinate[nvert] (n4) at (-30:\r);

			\draw[-]
			(n2) -- node[right]{$B$} (n3) -- node[above]{$A$}(n4) -- node[left]{$C$}(n2)
			;

		  \end{scope}
	
	  \end{tikzpicture}
	  \caption{The Ising Y-$\Delta$ move.} \label{figyd}
\end{figure}
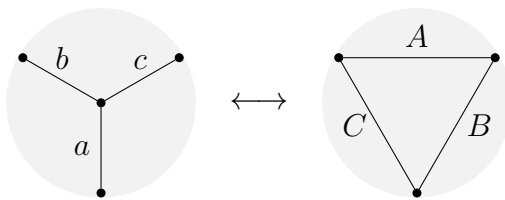

Ising models have a local transformation called the Y-$\Delta$ or star-triangle move which replaces a portion of a graph $G$ that looks like one side of Figure~\ref{figyd} with a portion that looks like the other side to get a graph $G'$ and modifies the $x$ weights by
\begin{align*}
	A=\sqrt{ \frac{(abc+1)(a+bc)}{(b+ac)(c+ab)}},\quad
	B=\sqrt{ \frac{(abc+1)(b+ac)}{(a+bc)(c+ab)}},\quad
	C=\sqrt{ \frac{(abc+1)(c+ab)}{(a+bc)(b+ac)}}.
\end{align*}
The Y-$\Delta$ move gives rise to a bijection $\mathcal I_{G}(\Rpos) \ra \mathcal I_{G'}(\Rpos)$. Moreover, the Y-$\Delta$ move and duality are compatible, i.e., the following diagram commutes
\[
\begin{tikzcd}
	\mathcal I_{G}(\Rpos) \arrow[r,"\text{Y-$\Delta$ move}"]\arrow[d,"\text{duality}"] &\mathcal I_{G'}(\Rpos) \arrow[d,"\text{duality}"]\\
	\mathcal I_{G^*}(\Rpos) \arrow[r,"\text{Y-$\Delta$ move}"] & \mathcal I_{(G')^*}(\Rpos)
\end{tikzcd}.
\]

We say that two graphs $G$ and $G'$ are \textit{move-equivalent} if they are related a sequence of Y-$\Delta$ moves and duality. 
\begin{theorem}[{\cite[Theorem 5.4]{GK}}] 
	For every centrally symmetric convex integral polygon $N$, there is a family of minimal graphs with Newton polygon $N$. Any two members of a family are move equivalent. In other words, the set of move equivalence classes of minimal graphs is in bijection with the set of centrally symmetric convex integral polygons.
\end{theorem}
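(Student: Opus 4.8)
The plan is to prove the two assertions — existence of a minimal graph for each centrally symmetric polygon, and move-equivalence of any two minimal graphs sharing a Newton polygon — by passing through the bipartite correspondence $G \mapsto G^\square$ and invoking the Goncharov--Kenyon classification of minimal bipartite graphs \cite{GK} for arbitrary (not necessarily symmetric) Newton polygons. First I would record that $[G] \mapsto N_G$ is well defined on move-equivalence classes and lands in centrally symmetric polygons: since zig-zag paths come in antipodal pairs $\{\alpha,\overline\alpha\}$ with $[\overline\alpha]=-[\alpha]$, the boundary of $N_G$ is closed under $\sigma$, so $N_G$ is centrally symmetric; a Y-$\Delta$ move only reroutes zig-zag paths inside a disk and hence preserves every class $[\alpha]$, while duality $G \mapsto G^*$ identifies the zig-zag paths of $G$ and $G^*$, whence $N_{G^*} = N_G$ (using central symmetry should the orientations be reversed). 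Thus $N_G$ is a move invariant.

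For existence, given a centrally symmetric convex integral polygon $N$, note that its counterclockwise boundary is a cyclic sequence of primitive vectors automatically closed under $v \mapsto -v$. I would realize these vectors as an arrangement of oriented closed curves on $\T$ in minimal (taut) position, one curve per boundary edge. Such an arrangement is the zig-zag arrangement of a graph $G_N=(V,E,F)$ all of whose faces are disks, hence a valid Ising graph; its minimality is exactly the requirement that the curves pairwise intersect minimally, and by construction the multiset of zig-zag homology classes is the boundary of $N$, so $N_{G_N}=N$. This shows surjectivity of $[G] \mapsto N_G$.

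For uniqueness (injectivity), let $G$ and $G'$ be minimal with $N_G = N_{G'} = N$. Applying $G \mapsto G^\square$ produces minimal bipartite graphs with a common Newton polygon $N_{G^\square}$ determined by $N$, so by \cite[Theorem 5.4]{GK} they are joined by a sequence of square moves and contraction/uncontraction moves. The crux is then to descend this sequence to the Ising side: using the compatibility of the Ising Y-$\Delta$ move and of duality with the bipartite local moves \cite{KP}, I would show the sequence can be chosen equivariant for the involution that swaps the two black vertices of each Ising gadget, and that an equivariant sequence of bipartite moves is precisely the image of a sequence of Ising Y-$\Delta$ moves and dualities. Combining surjectivity and injectivity yields the asserted bijection between move-equivalence classes and centrally symmetric convex integral polygons.

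I expect the main obstacle to be this last descent step. A general sequence of bipartite moves connecting $G^\square$ and $(G')^\square$ need not respect the Ising symmetry or remain inside the Ising locus, so one must argue that it can be replaced by a $\sigma$-symmetric one that performs the two antipodally related halves of each move together: the Ising Y-$\Delta$ move arises exactly as such a symmetric pair of bipartite moves, and duality as the global color/central-symmetry swap. Controlling this equivariance while preserving minimality at every intermediate graph is where the genuine work lies.
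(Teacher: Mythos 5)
There is nothing in the paper to compare against: the statement is imported wholesale as \cite[Theorem 5.4]{GK} and the paper gives no proof of it. So your proposal has to stand as an independent proof, and as such it has two genuine gaps, one in each half.

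The existence half stops exactly where the work begins. Since the zig-zag paths of an Ising graph come in pairs $\alpha,\overline\alpha$ traversing the \emph{same} edges, the arrangement you need consists of one \emph{unoriented} curve per antipodal pair $\{v,-v\}$ of boundary vectors, not ``one curve per boundary edge'': two disjoint parallel, oppositely oriented taut curves can never be a pair $\alpha,\overline\alpha$ of any graph. More seriously, the step ``such an arrangement is the zig-zag arrangement of a graph $G_N$'' is not automatic and is precisely where central symmetry enters. One recovers $G_N$ by checkerboard-coloring the complementary regions of the $4$-valent arrangement into vertex regions and face regions, and such a $2$-coloring exists if and only if the sum of the chosen curves' classes vanishes in $H_1(\T,\Z/2\Z)$. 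This is where one uses that $N$ is symmetric about a \emph{lattice} point: half of $\partial N$ runs from a vertex $p$ to $-p$, so the one-per-pair sum is $-2p\equiv 0 \pmod 2$. Without this the claim is false; e.g.\ single curves of classes $(1,0)$ and $(0,1)$ (the would-be arrangement for the unit square, whose center is not a lattice point) have a single complementary face and bound no graph. Your sketch uses central symmetry only to say the vector set is closed under negation, which misses the actual role it plays.

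The injectivity half is the crux, and you explicitly leave it unproved. Descending a bipartite move sequence from $G^\square$ to $(G')^\square$ down to Y-$\Delta$ moves and dualities is not a routine equivariance argument: a generic sequence of square and contraction moves passes through bipartite graphs that are not of the form $H^\square$ for any graph $H$, and for such intermediate graphs ``performing the two antipodally related halves of each move together'' is undefined, since the symmetry you want to impose is not a symmetry of an individual bipartite torus graph. (As literally written the step is also circular: per this paper's attribution, \cite[Theorem 5.4]{GK} \emph{is} the statement being proved; the bipartite classification is a different theorem of \cite{GK}.) The standard proofs avoid this entirely by working directly with the taut curve arrangement: any two taut arrangements on $\T$ with the same homology data are related by local triangle ($2$--$2$) moves, each of which realizes a Y-$\Delta$ move on $G$, with the swap of the two checkerboard colorings realizing duality. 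Your route would require a new argument for the descent, and none is supplied.
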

\section{Background on the dimer model}
In this section, we give a brief background on the dimer model and its spectral transform, mostly following~\cite{Ken, GK}.
\subsection{Dimer models in $\T$}

A \textit{dimer model} in $\T$ is a pair $(\Gamma,[\wt])$ where 
\begin{enumerate}
	\item $\Gamma=(B \sqcup W,E,F)$ is a bipartite graph embedded in $\T$ such that the faces of $\Gamma$ are topological disks,
	\item $[\wt]$ is the gauge-equivalence class of $\wt:E  \ra \C^\times$ which is a function assigning to each edge its \textit{edge weight}, and two edge weights $\wt_1$ and $\wt_2$ are \textit{gauge equivalent} if there is a function $f:B \sqcup W \ra \C^\times$ such that for every edge $e=\{\b,\bw\} \in E$ (where $\b \in B, \bw \in W$),
	\[
		\wt'(e) = f(\b)^{-1} \wt(e) f(\bw).
		\]
		
\end{enumerate}

	Let $\mathcal X_\Gamma$ denote the space of edge weights modulo gauge equivalence. It will be convenient to rephrase the above in the language of algebraic topology. In particular, doing so will help us identify coordinates on $\mathcal X_\Gamma$ that are invariant under gauge equivalence.

 We consider $\Gamma$ as a cell complex where the vertices are the $0$-cells and the edges are the $1$-cells, where we orient each edge $e = \{\b,\bw\}$ from $\b$ to $\bw$. The cellular chain complex is
\[
0 \ra C_1(\Gamma,\Z) \xrightarrow[]{\partial} C_0(\Gamma,\Z)\ra 0,	
\]
where $C_0(\Gamma,\Z) = \Z B \oplus \Z W$, $C_1(\Gamma,\Z) = \Z E$ and $\partial e = \bw - \b$. Therefore, 
\[
H_1(\Gamma,\Z) = \ker (C_1(\Gamma,\Z) \xrightarrow[]{\partial} C_0(\Gamma,\Z)),	
\]
so $1$-homology classes in $\Gamma$ are the same thing as $1$-cycles in $\Gamma$. 

The cellular cochain complex is 
\[
	1 \ra C^0(\Gamma,\C^\times) \xrightarrow[]{d} C^1(\Gamma,\C^\times)\ra 1,
\]
where 
\begin{enumerate}
	\item $C^0(\Gamma,\C^\times) = \text{Hom}_\Z(C_0(\Gamma,\Z),\C^\times) \cong \{f: B \sqcup W \ra \C^\times\}$,
	\item $C^1(\Gamma,\C^\times)= \text{Hom}_\Z(C_1(\Gamma,\Z),\C^\times) \cong \{ \wt: E \ra \C^\times\}$,
	\item $(d f)(e) ={f(\b)}^{-1}{f (\bw)}$,
\end{enumerate}
where $\text{Hom}_\Z(\cdot,\C^\times)$ denotes the space of abelian group homomorphisms from $\cdot$ to $\C^\times$. Therefore, a $1$-cocycle is the same thing as an edge weight and two $1$-cocycles differ by a $1$-coboundary if they are gauge equivalent, so 
\[
\mathcal X_\Gamma = H^1(\Gamma,\C^\times).	
\]
With this identification, we can describe coordinates on $\mathcal X_\Gamma$. We describe this abstractly first and then explain what it means concretely. 

Since $H^1(\Gamma,\C^\times)=\mathrm{Hom}_\Z(H_1(\Gamma,\Z),\C^\times)$ is the algebraic torus with lattice of characters $H_1(\Gamma,\Z)$, its coordinate ring is 
\[
\C [H_1(\Gamma,\Z)] = \C\left[\sum_{\gamma \in H_1(\Gamma,\Z)} a_{\gamma}X_{\gamma}: a_{\gamma} = 0 \text{ for all but finitely many }\gamma\right]	
\] 
with multiplication given by $X_{\gamma_1} \cdot X_{\gamma_2}=X_{\gamma_1+\gamma_2}$, where $X_\gamma:\mathcal X_\Gamma \ra \C^\times$ is the character given by $X_\gamma([\wt])= [\wt](\gamma)$, i.e., by evaluating the cohomology class $[\wt]$ on the cycle $\gamma$. Explicitly, if $\gamma$ is the cycle ${\rm b}_1 \xrightarrow[]{e_1} {\rm w}_1 \xrightarrow[]{e_2} {\rm b}_2 \xrightarrow[]{e_3} {\rm w}_2 \xrightarrow[]{e_4} \cdots \xrightarrow[]{e_{2n-2}} {\rm b}_n \xrightarrow[]{e_{2n-1}} {\rm w}_n \xrightarrow[]{e_{2n}} {\rm b}_1 $ , then 
\[
[\wt](\gamma)=\frac{\wt({e_1}) \cdots \wt{(e_{2n-1}})}{ \wt{(e_2)} \cdots \wt{(e_{2n})}}.
\]
 In down to earth terms, this means that the $X_{\gamma}$'s are a set coordinates on $\mathcal X_\Gamma$ but they are not independent and have to satisfy some relations. If we choose a basis for $H_1(\Gamma,\Z)$, then we obtain a basis for the set of coordinates. There is no canonical choice for a basis but the following choice is convenient for computations. We identify faces of $\Gamma$ with the cycles given by their counterclockwise-oriented boundaries. Then each face $f$ determines a regular function $X_{f}$ on $\mathcal X_\Gamma$. Since $\sum_{f \in F} f=0$ in $H_1(\Gamma,\Z)$, we have $\prod_{f \in F}X_{f}=1$. Additionally if we choose two cycles $a,b$ in $\Gamma$ generating $H_1(\T,\Z)$, then $
	\{\text{all faces except one}, a , b\}$ is a basis for $H_1(\Gamma,\Z)$ and so \begin{equation} \label{h1:basis}
		\{X_f\}_{\text{all faces $f$ except one}} \sqcup \{X_a,X_b\}
	\end{equation}
		is a basis for the coordinate ring.

Let $\mathcal X_\Gamma(\Rpos)$ denote the positive real valued points of $\mathcal X_\Gamma$, i.e., the set of points $[\wt]$ where $X_\gamma([\wt]) \in \Rpos$ for all $\gamma \in H_1(\Gamma,\Z)$. 

\subsection{Zig-zag paths}
A \textit{zig-zag path} in $\Gamma$ is a cycle that turns maximally right at black vertices and maximally left at white vertices. We denote the set of zig-zag paths of $\Gamma$ by $Z_\Gamma$. The unique (modulo translation) convex integral polygon $N_\Gamma \subset  \R^2 $ whose counterclockwise-oriented boundary consists of $\{[\alpha] \in H_1(\T,\Z) \cong \Z^2  : \alpha \in  Z_\Gamma\}$ is called the \textit{Newton polygon} of $\Gamma$. For a side $S$ of $N$, let $Z_{\Gamma,S}$ denote the subset of zig-zag paths whose homology classes form the side $S$. 

Recall that $\pi:\R^2 \ra \T$ is the universal cover. We say that $\Gamma$ is \textit{minimal} if there are no zig-zag paths with zero homology, no lift of a zig-zag path has a self-intersection, and lifts of two zig-zag paths do not form \textit{parallel bigons}, i.e., they do no pass through two edges $e_1 \neq e_2$ of the biperiodic graph $\pi^{-1}(\Gamma) \subset \R^2$ with both lifts oriented from $e_1$ to $e_2$. Hereafter we assume that all our bipartite torus graphs are minimal.

\subsection{Moves}
\begin{figure}
	\begin{tikzpicture}[scale=0.6]
		\def\ep{0.2}

		\begin{scope}[shift={(-3,0)}]
			\def\r{2};
			\fill[black!5] (0,0) circle (1.05*\r cm);
			\coordinate[nvert, label=right:${\rm v}$] (n1) at (0,0);
			\coordinate[] (n2) at (90:\r);
			\coordinate[] (n3) at (90+120:\r);
			\coordinate[] (n4) at (-30:\r);

			\draw[-]
			(n1) edge  (n2) edge (n3) edge (n4)
			;
		
			\draw[red,->,line width=\lw,rounded corners=\rc] (90+120-5:\r) -- (120:\ep) -- (95:\r);
			\draw[blue,->,line width=\lw,rounded corners=\rc] (90+120+5:\r) -- (270:\ep) -- (-35:\r);

			\node[](no) at (120:1.5) {$\delta$};
			\node[](no) at (-60:1.5) {$\gamma$};
		  \end{scope}

	  \end{tikzpicture}
	  \caption{The local pairing at a trivalent vertex $\rm v$ (which may be black or white) is defined by $\epsilon_{\rm v}(\gamma,\delta)=\frac 1 2$ and bilinearity and antisymmetry.} \label{fig:eps}
\end{figure}
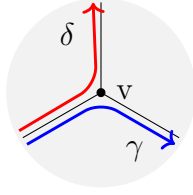

\begin{figure}
	\begin{tikzpicture}[scale=0.6]
		\draw[] (2.5-5,-2) -- (2.5-5,7);

		\node[](no) at (-5.5,-4) {(a)};
		\node[](no) at (-5.5+5.5,-4) {(b)};
		\node[](no) at (-5.5+10.5,-4) {(c)};
		\node[](no) at (-5.5+15.5,-4) {(d)};
		\node[](no) at (-5.5+20.5,-4) {(e)};

		\node[](no) at (-5.5+5.5,3.5) {$f_1$};
		\node[](no) at (-5.5+10.5,3.5) {$f_2$};
		\node[](no) at (-5.5+15.5,3.5) {$f_3$};
		\node[](no) at (-5.5+20.5,3.5) {$f_4$};

		\node[](no) at (-5.5+5.5,-2.5) {$f_1'+f'$};
		\node[](no) at (-5.5+10.5,-2.5) {$f_2'$};
		\node[](no) at (-5.5+15.5,-2.5) {$f_3'+f'$};
		\node[](no) at (-5.5+20.5,-2.5) {$f_4'$};

		\begin{scope}[shift={(-5.5,0)},rotate=45]
			\def\r{2};
			\fill[black!5] (0,0) circle (\r cm);
			\coordinate[wvert,label=right: $\bw_2$] (n1) at (0:\r);
			\coordinate[wvert,label=left: $\bw_1$] (n2) at (0+90:\r);
			\coordinate[wvert,label=left: $\bw_4$] (n3) at (0+180:\r);
			\coordinate[wvert,label=right: $\bw_3$] (n4) at (0+270:\r);

			\coordinate[bvert,label=above: $\b_1'$] (b1) at (0:0.5*\r);
			\coordinate[bvert,label=below: $\b_2'$] (b2) at (0+180:0.5*\r);
			
			\draw[-]
							(n1) --  (b1) --  (n2)--(b2)--(n4) -- (b1)
							(b2)--(n3)
							;
							\node[red] (no) at (0,0) {$f'$};	
							\node[red] (no) at (45:1.7) {$f_2'$};
							\node[red] (no) at (45+90:1.7) {$f_1'$};
							\node[red] (no) at (45+180:1.7) {$f_4'$};
							\node[red] (no) at (45-90:1.7) {$f_3'$};			
		
			\end{scope}

			\begin{scope}[shift={(-5.5,6)}
				,rotate=45+90
				]
				\def\r{2};
				\fill[black!5] (0,0) circle (\r cm);
				\coordinate[wvert,label=left: $\bw_1$] (n1) at (0:\r);
				\coordinate[wvert,label=left: $\bw_4$] (n2) at (0+90:\r);
				\coordinate[wvert,label=right: $\bw_3$] (n3) at (0+180:\r);
				\coordinate[wvert,label=right: $\bw_2$] (n4) at (0+270:\r);

				\coordinate[bvert,label=above: $\b_1$] (b1) at (0:0.5*\r);
				\coordinate[bvert,label=below: $\b_2$] (b2) at (0+180:0.5*\r);
				
				\draw[-]
								(n1) --  (b1) --  (n2)--(b2)--(n4) -- (b1)
								(b2)--(n3)
								;
		
				\node[red] (no) at (0,0) {$f$};	
				\node[red] (no) at (45:1.7) {$f_1$};
				\node[red] (no) at (45+90:1.7) {$f_4$};
				\node[red] (no) at (45+180:1.7) {$f_3$};
				\node[red] (no) at (45-90:1.7) {$f_2$};			
							
				\end{scope}

		\begin{scope}[shift={(0,0)}
			,rotate=45
			]
			\def\ep{0.2}
		  \def\r{2};
		  \fill[black!5] (0,0) circle (1.05*\r cm);
		  \coordinate[wvert] (n1) at (0:\r);
		  \coordinate[wvert] (n2) at (0+90:\r);
		  \coordinate[wvert] (n3) at (0+180:\r);
		  \coordinate[wvert] (n4) at (0+270:\r);
		  
		  \coordinate[bvert] (b1) at (0:0.5*\r);
		  \coordinate[bvert] (b2) at (0+180:0.5*\r);
		  
		  \draw[-]
		  (n1) --  (b1) --  (n2)--  (b2)-- (n4) --  (b1)
		  (b2)--(n3)
		  ;
		  \coordinate[] (t1) at (15:\r);
		  \coordinate[] (t2) at (120-45:\r);
		  \coordinate[] (t3) at (150-45:\r);
		  \coordinate[] (t4) at (210-45:\r);
		  \coordinate[] (t5) at (240-45:\r);
		  \coordinate[] (t6) at (300-45:\r);
		  \coordinate[] (t7) at (330-45:\r);
		  \coordinate[] (t8) at (30-45:\r);
		  
		  \draw[red,->,line width=\lw,rounded corners=\rc] (-2,0.1+\ep) -- (-1+\ep,\ep) -- (0,-2+\ep)--(1-\ep,-\ep)-- (-0.1-\ep,2);
		 
		\end{scope}
		
		\begin{scope}[shift={(0,6)}
			,rotate=180+90+45
			]
			\def\ep{0.2}
		  \def\r{2};
		  \fill[black!5] (0,0) circle (1.05*\r cm);
		  \coordinate[wvert] (n1) at (0:\r);
		  \coordinate[wvert] (n2) at (0+90:\r);
		  \coordinate[wvert] (n3) at (0+180:\r);
		  \coordinate[wvert] (n4) at (0+270:\r);
		  
		  \coordinate[bvert] (b1) at (0:0.5*\r);
		  \coordinate[bvert] (b2) at (0+180:0.5*\r);
		  
		  \draw[-]
		  (n1) --  (b1) --  (n2)--  (b2)-- (n4) --  (b1)
		  (b2)--(n3)
		  ;
		  \coordinate[] (t1) at (15:\r);
		  \coordinate[] (t2) at (120-45:\r);
		  \coordinate[] (t3) at (150-45:\r);
		  \coordinate[] (t4) at (210-45:\r);
		  \coordinate[] (t5) at (240-45:\r);
		  \coordinate[] (t6) at (300-45:\r);
		  \coordinate[] (t7) at (330-45:\r);
		  \coordinate[] (t8) at (30-45:\r);
		  
		  \draw[red,<-,line width=\lw,rounded corners=\rc] (-2,-0.1-\ep) -- (-1-\ep,-\ep) -- (-\ep-0.1,-2);
		 
		\end{scope}

		\draw[] (2.5,-2) -- (2.5,7);


		\begin{scope}[shift={(5,0)}
			,rotate=45
			]
			\def\ep{0.2}
		  \def\r{2};
		  \fill[black!5] (0,0) circle (1.05*\r cm);
		  \coordinate[wvert] (n1) at (0:\r);
		  \coordinate[wvert] (n2) at (0+90:\r);
		  \coordinate[wvert] (n3) at (0+180:\r);
		  \coordinate[wvert] (n4) at (0+270:\r);
		  
		  \coordinate[bvert] (b1) at (0:0.5*\r);
		  \coordinate[bvert] (b2) at (0+180:0.5*\r);
		  
		  \draw[-]
		  (n1) --  (b1) --  (n2)--  (b2)-- (n4) --  (b1)
		  (b2)--(n3)
		  ;
		  \coordinate[] (t1) at (15:\r);
		  \coordinate[] (t2) at (120-45:\r);
		  \coordinate[] (t3) at (150-45:\r);
		  \coordinate[] (t4) at (210-45:\r);
		  \coordinate[] (t5) at (240-45:\r);
		  \coordinate[] (t6) at (300-45:\r);
		  \coordinate[] (t7) at (330-45:\r);
		  \coordinate[] (t8) at (30-45:\r);
		  
		  \draw[red,->,line width=\lw,rounded corners=\rc]  (\ep+0.1,2) -- (1+\ep,\ep) --(2,+\ep+0.1);
		 
		\end{scope}
		
		\begin{scope}[shift={(5,6)}
			,rotate=180+90+45
			]
			\def\ep{0.2}
		  \def\r{2};
		  \fill[black!5] (0,0) circle (1.05*\r cm);
		  \coordinate[wvert] (n1) at (0:\r);
		  \coordinate[wvert] (n2) at (0+90:\r);
		  \coordinate[wvert] (n3) at (0+180:\r);
		  \coordinate[wvert] (n4) at (0+270:\r);
		  
		  \coordinate[bvert] (b1) at (0:0.5*\r);
		  \coordinate[bvert] (b2) at (0+180:0.5*\r);
		  
		  \draw[-]
		  (n1) --  (b1) --  (n2)--  (b2)-- (n4) --  (b1)
		  (b2)--(n3)
		  ;
		  \coordinate[] (t1) at (15:\r);
		  \coordinate[] (t2) at (120-45:\r);
		  \coordinate[] (t3) at (150-45:\r);
		  \coordinate[] (t4) at (210-45:\r);
		  \coordinate[] (t5) at (240-45:\r);
		  \coordinate[] (t6) at (300-45:\r);
		  \coordinate[] (t7) at (330-45:\r);
		  \coordinate[] (t8) at (30-45:\r);
		  
		  \draw[red,->,line width=\lw,rounded corners=\rc] (-2,0.1+\ep) -- (-1-\ep,\ep) -- (-\ep-0.1,2);
		 
		\end{scope}

		\draw[] (2.5+5,-2) -- (2.5+5,7);

		\begin{scope}[shift={(10,0)}
			,rotate=180+45
			]
			\def\ep{0.2}
		  \def\r{2};
		  \fill[black!5] (0,0) circle (1.05*\r cm);
		  \coordinate[wvert] (n1) at (0:\r);
		  \coordinate[wvert] (n2) at (0+90:\r);
		  \coordinate[wvert] (n3) at (0+180:\r);
		  \coordinate[wvert] (n4) at (0+270:\r);
		  
		  \coordinate[bvert] (b1) at (0:0.5*\r);
		  \coordinate[bvert] (b2) at (0+180:0.5*\r);
		  
		  \draw[-]
		  (n1) --  (b1) --  (n2)--  (b2)-- (n4) --  (b1)
		  (b2)--(n3)
		  ;
		  \coordinate[] (t1) at (15:\r);
		  \coordinate[] (t2) at (120-45:\r);
		  \coordinate[] (t3) at (150-45:\r);
		  \coordinate[] (t4) at (210-45:\r);
		  \coordinate[] (t5) at (240-45:\r);
		  \coordinate[] (t6) at (300-45:\r);
		  \coordinate[] (t7) at (330-45:\r);
		  \coordinate[] (t8) at (30-45:\r);
		  
		  \draw[red,->,line width=\lw,rounded corners=\rc] (-2,0.1+\ep) -- (-1+\ep,\ep) -- (0,-2+\ep)--(1-\ep,-\ep)-- (-0.1-\ep,2);
		 
		\end{scope}
		
		\begin{scope}[shift={(10,6)}
			,rotate=90+45
			]
			\def\ep{0.2}
		  \def\r{2};
		  \fill[black!5] (0,0) circle (1.05*\r cm);
		  \coordinate[wvert] (n1) at (0:\r);
		  \coordinate[wvert] (n2) at (0+90:\r);
		  \coordinate[wvert] (n3) at (0+180:\r);
		  \coordinate[wvert] (n4) at (0+270:\r);
		  
		  \coordinate[bvert] (b1) at (0:0.5*\r);
		  \coordinate[bvert] (b2) at (0+180:0.5*\r);
		  
		  \draw[-]
		  (n1) --  (b1) --  (n2)--  (b2)-- (n4) --  (b1)
		  (b2)--(n3)
		  ;
		  \coordinate[] (t1) at (15:\r);
		  \coordinate[] (t2) at (120-45:\r);
		  \coordinate[] (t3) at (150-45:\r);
		  \coordinate[] (t4) at (210-45:\r);
		  \coordinate[] (t5) at (240-45:\r);
		  \coordinate[] (t6) at (300-45:\r);
		  \coordinate[] (t7) at (330-45:\r);
		  \coordinate[] (t8) at (30-45:\r);
		  
		  \draw[red,<-,line width=\lw,rounded corners=\rc] (-2,-0.1-\ep) -- (-1-\ep,-\ep) -- (-\ep-0.1,-2);
		 
		\end{scope}

		\draw[] (12.5,-2) -- (12.5,7);
		\begin{scope}[shift={(15,0)}
			,rotate=180+45
			]
			\def\ep{0.2}
		  \def\r{2};
		  \fill[black!5] (0,0) circle (1.05*\r cm);
		  \coordinate[wvert] (n1) at (0:\r);
		  \coordinate[wvert] (n2) at (0+90:\r);
		  \coordinate[wvert] (n3) at (0+180:\r);
		  \coordinate[wvert] (n4) at (0+270:\r);
		  
		  \coordinate[bvert] (b1) at (0:0.5*\r);
		  \coordinate[bvert] (b2) at (0+180:0.5*\r);
		  
		  \draw[-]
		  (n1) --  (b1) --  (n2)--  (b2)-- (n4) --  (b1)
		  (b2)--(n3)
		  ;
		  \coordinate[] (t1) at (15:\r);
		  \coordinate[] (t2) at (120-45:\r);
		  \coordinate[] (t3) at (150-45:\r);
		  \coordinate[] (t4) at (210-45:\r);
		  \coordinate[] (t5) at (240-45:\r);
		  \coordinate[] (t6) at (300-45:\r);
		  \coordinate[] (t7) at (330-45:\r);
		  \coordinate[] (t8) at (30-45:\r);
		  
		  \draw[red,->,line width=\lw,rounded corners=\rc]  (\ep+0.1,2) -- (1+\ep,\ep) --(2,+\ep+0.1);
		 
		\end{scope}
		
		\begin{scope}[shift={(15,6)}
			,rotate=90+45
			]
			\def\ep{0.2}
		  \def\r{2};
		  \fill[black!5] (0,0) circle (1.05*\r cm);
		  \coordinate[wvert] (n1) at (0:\r);
		  \coordinate[wvert] (n2) at (0+90:\r);
		  \coordinate[wvert] (n3) at (0+180:\r);
		  \coordinate[wvert] (n4) at (0+270:\r);
		  
		  \coordinate[bvert] (b1) at (0:0.5*\r);
		  \coordinate[bvert] (b2) at (0+180:0.5*\r);
		  
		  \draw[-]
		  (n1) --  (b1) --  (n2)--  (b2)-- (n4) --  (b1)
		  (b2)--(n3)
		  ;
		  \coordinate[] (t1) at (15:\r);
		  \coordinate[] (t2) at (120-45:\r);
		  \coordinate[] (t3) at (150-45:\r);
		  \coordinate[] (t4) at (210-45:\r);
		  \coordinate[] (t5) at (240-45:\r);
		  \coordinate[] (t6) at (300-45:\r);
		  \coordinate[] (t7) at (330-45:\r);
		  \coordinate[] (t8) at (30-45:\r);
		  
		  \draw[red,->,line width=\lw,rounded corners=\rc] (-2,0.1+\ep) -- (-1-\ep,\ep) -- (-\ep-0.1,2);
		 
		\end{scope}

	  \end{tikzpicture}
	  \caption{(a) Labeling of the graph near a square face $f$ and (b)--(e) the correspondence $s_*$ between cycles in $\Gamma$ and $\Gamma'$.} \label{fig:cycle_corresponds}\label{fig:moves}
\end{figure}
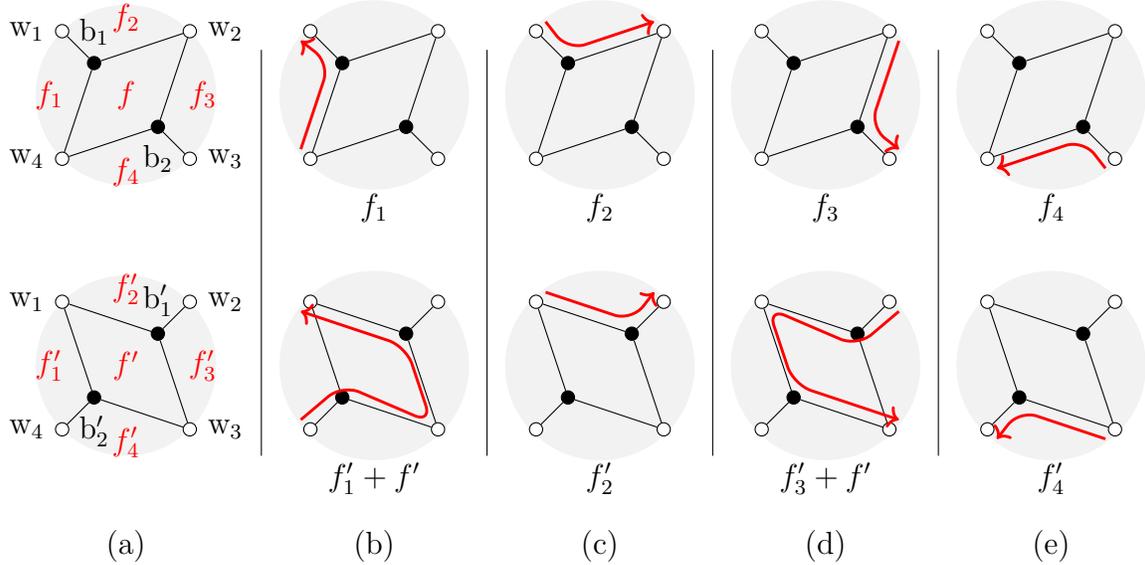

There are two local moves on bipartite torus graphs shown in Figure~\ref{fig:squareintro}. Each move $\Gamma \stackrel{s}{\rightsquigarrow} \Gamma'$ induces a pair of maps:
\begin{enumerate}
    \item An isomorphism $s_*:H_1(\Gamma,\Z) \ra H_1(\Gamma',\Z)$ such that $s_*$ restricts to a bijection between $Z_{\Gamma}$ and $Z_{\Gamma'}$ and such that $[s_*(\gamma)] = [\gamma] \in H_1(\T,\Z)$, 
\item A bijection $\mu_s: \mathcal X_{\Gamma}(\Rpos) \ra \mathcal X_{\Gamma'}(\Rpos)$.
\end{enumerate}

We discuss the case of the square move at a face $f$ in more detail (see~\cite{GK, GLSS}). Let $\Gamma_f$ and $\Gamma_f'$ denote the portions of the graphs shown in Figure~\ref{fig:moves}. Suppose the vertices and faces near $f$ are labeled as in Figure~\ref{fig:moves} (left). Let $\mathbf M$ denote the set $\{\w_1,\w_2,\w_3,\w_4\}$. Then each cycle in $\Gamma$ (resp., $\Gamma'$) restricts to a relative cycle in $H_1(\Gamma_f,\mathbf M,\Z)$ (resp., $H_1(\Gamma_f',\mathbf M,\Z)$) and $f_1,f_2,f_3,f_4$ (resp., $f_1',f_2',f_3',f_4'$) is a basis for $H_1(\Gamma_f,\mathbf M,\Z)$ (resp., $H_1(\Gamma_f',\mathbf M,\Z)$). Note that $f = -f_1-f_2-f_3-f_4$ and similarly for $f'$.

In these bases, $s_*$ is given by the formula 
\begin{equation}\label{eq:fmut}
	s_*(f_i) =f_i'  + \text{max}(0, \langle f_i,f \rangle ) f',
\end{equation}
where $\langle \cdot,\cdot \rangle:H_1(\Gamma,\Z) \times H_1(\Gamma,\Z)$ is an antisymmetric bilinear form on cycles defined by the intersection pairing on the conjugate surface (see~\cite[Section 1.1.1]{GK}). It can be computed by 
\[
\langle \gamma,\delta \rangle =  \sum_{{\rm b} \in B } \epsilon_{\b}(\gamma,\delta)	-\sum_{{\rm w} \in W } \epsilon_{\bw}(\gamma,\delta),	
\]
where $\epsilon_{\rm v}(\gamma,\delta)$ is the local pairing shown in Figure~\ref{fig:eps}. Since all the vertices in $G^\square$ are trivalent, we do not give the general rule (see~\cite[Appendix]{GK}).

For example, we have 
\[
\langle f_1,f\rangle = \epsilon_{\b_1}(f_1,f)-\epsilon_{\bw_4}(f_1,f) = \frac 1 2 -\left( -\frac 1 2 \right) =1,	
\]
so $s_*(f_1) = f_1' + f'$. The cycles $s_*(f_i), i \in \{1,2,3,4\}$, are shown in Figure~\ref{fig:moves}(b)--(e). Note that
\[
	s_*(f)=s_*(-f_1-f_2-f_3-f_4)=-(f_1'+f')-f_2'-(f_3'+f')-f_4' =-f'.
\]

The bijection $\mu_s: \mathcal X_\Gamma (\Rpos) \ra \mathcal X_{\Gamma'}(\Rpos)$ is given by
\begin{equation} \label{eq:mu}
	X_{s_*(\gamma)}(\mu_s([\wt])) = X_\gamma([\wt]) (1+X_f([\wt]))^{-\langle \gamma,f \rangle},
\end{equation}
or more succinctly, $\mu^*_sX_{s_*(\gamma)} =X_\gamma (1+X_f)^{-\langle \gamma,f \rangle}$, which is the formula for mutation of $X$ cluster variables. For example, 
\begin{align*}
	\mu^*_s X_{f'} =\mu^*_s X_{s_*(-f)}= X_{-f} (1+X_f)^{-\langle -f,f \rangle}=X_f^{-1},
\end{align*}
and \[
	\mu^*_s X_{f_1'} = \mu^*_s X_{s_*(f_1+f)}=X_{f_1+f}(1+X_f)^{-1} = X_{f_1} (1+X_f^{-1})^{-1}.
	\]

\subsection{The mapping from Ising models to dimer models}

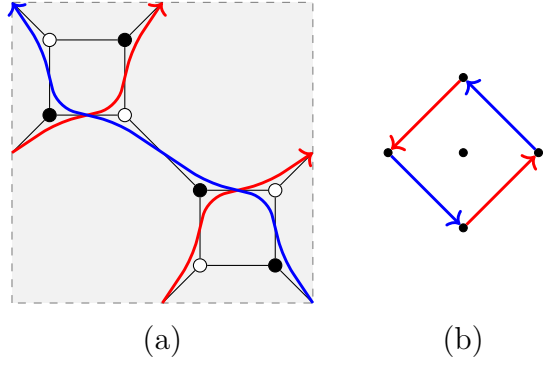
\begin{figure}
	\begin{tikzpicture}[scale=0.5]

		 \begin{scope}[shift={(10,0)},rotate=0]
\def\ep{0.2}
			\fill[black!5] (0,0) rectangle (8,8);
			\draw[gray,dashed] (0,0) rectangle (8,8);

			\coordinate[bvert] (b1) at (1,5);
			\coordinate[bvert] (b2) at (1+2,5+2);
			\coordinate[bvert] (b3) at (5,3);
			\coordinate[bvert] (b4) at (7,1);
  
			\coordinate[wvert] (w1) at (1,7);
			\coordinate[wvert] (w2) at (1+2,5);
			\coordinate[wvert] (w3) at (5,1);
			\coordinate[wvert] (w4) at (7,3);
  
			\draw[] (b1) -- (0,4) 
			(b1) -- (w2)
			(b1) --(w1);
  
			\draw[] (b2) -- (w1)
			(b2) -- (4,8)
			(b2) -- (w2);
  
			\draw[] (b3) -- (w3)
			(b3) -- (w2)
			(b3) -- (w4) ;
  
			\draw[] (b4) -- (w3)
			(b4) -- (8,0)
			(b4) -- (w4) ;
  
			\draw[] (w4) -- (8,4)
			(0,8) -- (w1)
			(4,0) -- (w3);

			\draw[red,->,line width=\lw,rounded corners=\rc] (0,4) -- (1+\ep,5-\ep) -- (3-\ep,5+\ep) -- (3+\ep,7-\ep) -- (4,8); 

			\draw[red,->,line width=\lw,rounded corners=\rc] (4,0) -- (5-\ep,1+\ep) -- (5+\ep,3-\ep) -- (7-\ep,3+\ep) -- (8,4); 

			\draw[blue,->,line width=\lw,rounded corners=\rc] (8,0) -- (7+\ep,1+\ep) -- (7-\ep,3-\ep) -- (5+\ep,3+\ep) -- (3-\ep,5-\ep) -- (1+\ep,5+\ep) -- (1-\ep,7-\ep) -- (0,8);
			\node[](no) at (4,-1) {(a)};
		   \end{scope}

		 \begin{scope}[shift={(22,4)},rotate=0,scale=2]
			\coordinate[nvert] (00) at (0,0);
			\coordinate[nvert] (10) at (1,0);
			\coordinate[nvert] (01) at (0,1);
			\coordinate[nvert] (m10) at (-1,0);
			\coordinate[nvert] (0m1) at (0,-1);
\draw[red,->, line width = \lw] (0m1)--(10);
\draw[red,->, line width = \lw] (01)--(m10);
\draw[blue,->, line width = \lw] (10)--(01);
\draw[blue,->, line width = \lw] (m10)--(0m1);
\node[](no) at (0,-2.5) {(b)};
		  \end{scope}

	  \end{tikzpicture}
	  \caption{(a) Two of the zig-zag paths of the graph $G^\square$ from Figure~\ref{fig:isingintro}(b) and (b) its Newton polygon.} \label{fig:ising_torus_bijection}
\end{figure}

Recall the mapping from Ising models to dimer models from Figure~\ref{fig:isingintro}(b). There is a natural bijection between $Z_G$ and $Z_{G^\square}$ that preserves homology classes (compare Figure~\ref{fig:ising_torus}(a) and Figure~\ref{fig:ising_torus_bijection}(a)). Therefore, $N_{G^\square}=N_G$. 

A Y-$\Delta$ move $G_1 \ra G_2$ can be realized as a sequence of moves $G_1^\square \ra G_2^\square$ \cite[Figure 6]{KP} (see also~\cite[Section 4.3]{AGPR}), i.e., we have a commuting diagram
	\[
		\begin{tikzcd}
			\mathcal I_{G_1}(\Rpos) \arrow[r,"\text{Y-$\Delta$ move}"]\arrow[d,hook] &\mathcal I_{G_2}(\Rpos) \arrow[d,hook]\\
			\mathcal X_{G_1^\square}(\Rpos) \arrow[r,"\text{moves}"] & \mathcal X_{G_2^\square}(\Rpos)
		\end{tikzcd},
		\]	
 and the dual Ising model gives the dimer model $(\overline{G^\square},[\overline{\wt^\square}])$. Therefore, the mapping is compatible with respect to moves on both models.


\subsection{The spectral transform}

Let $(\Gamma,[\wt])$ be a dimer model in $\T$. We assign to each edge $e =\{\b,\bw\}$ of $\Gamma$ a monomial $\phi(e) := z^iw^j$ where $i$ (resp.,  $j$) records the number of signed intersections of $e$ with $\gamma_w$ (resp., $\gamma_z$) when we orient $e$ from $\b$ to $\bw$.  

A \textit{Kasteleyn sign} on $\Gamma$ is a $1$-cohomology class $[\kappa] \in H^1(\Gamma, \{\pm 1\})$ such that for any face $f \in F$, 
\[
 [\kappa](f) = (-1)^{\frac{\# \partial f}{2}+1},	
\]
where $e \in f$ means $e$ is incident to $f$. Here $\{\pm 1\}$ is the multiplicative group of square roots of unity. There are four choices of $[\kappa]$ corresponding to the four elements of $H^1(\T,\{\pm 1\})$. 
If we use the basis (\ref{h1:basis}), a choice of $[\kappa]$ is equivalent to a choice of $(X_a,X_b) \in \{\pm 1\}^2$. 

Let $\wt$ be a $1$-cocycle representing the cohomology class $[\wt]$ and let $\kappa:E \ra \{\pm 1\}$ be a $1$-cocycle representing $[\kappa]$. The \textit{Kasteleyn matrix} $K_{(\Gamma, \wt, \kappa)}(z,w)$ is the Laurent-polynomial-valued matrix with rows and columns indexed by white and black vertices respectively defined by
\[
K_{(\Gamma, \wt, \kappa)}(z,w)_{\bw \b} := \sum_{e=\{\b,\bw\} }\wt(e) \kappa(e) \phi(e),	
\] 
where the sum is over all edges with endpoints $\b$ and $\bw$. The determinant \[P_{(\Gamma, \wt, \kappa)}(z,w) := \det K_{(\Gamma, \wt, \kappa)}(z,w)\] is a Laurent polynomial called the \textit{characteristic polynomial}. The \textit{Newton polygon} of $P_{(\Gamma, \wt, \kappa)}(z,w)$ is defined as
\[
\text{convex-hull} \{ (i,j) \in \R^2 : \text{coefficient of $z^iw^j$ is nonzero in $P_{(\Gamma, \wt, \kappa)}(z,w)$}\}.	
\]
For minimal $\Gamma$, the Newton polygon of $P_{(\Gamma, \wt, \kappa)}(z,w)$ coincides with the Newton polygon of $\Gamma$, whence the name. The vanishing locus $C^\circ:=\{(z,w)\in (\C^\times)^2 : P_{(\Gamma, \wt, \kappa)}(z,w)=0\}$ is called the \textit{open spectral curve}. 

The \textit{amoeba} $\mathbb A(C^\circ)$ is defined to be the image of $C^\circ$ under the map $\operatorname{Log}:(\C^\times)^2 \ra \R^2$ defined by $(z,w) \mapsto (\log |z| , \log |w|)$. A curve $C^\circ$ defined by a real Laurent polynomial $P(z,w)$ is said to be a \textit{Harnack curve} if the restriction
\[
\operatorname*{Log}: C^\circ \ra \mathbb A(C^\circ)	
\]
is $2:1$ over the interior of $\mathbb A(C^\circ)$. For Harnack curves, the boundary of $\mathbb A(C^\circ)$ is the image of the real points $C^\circ(\R)$.

\begin{theorem}[\cite{KSh,KOS}]
	The open spectral curve $C^\circ$ of a dimer model is a Harnack curve.
\end{theorem}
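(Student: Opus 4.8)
The plan is to verify the defining property directly: that $\operatorname{Log}\colon C^\circ \to \mathbb{A}(C^\circ)$ is generically $2\colon 1$ over the interior of the amoeba. First I would record the algebraic input that makes this possible, namely the sign structure of $P := P_{(\Gamma,\wt,\kappa)}$ coming from Kasteleyn theory. Expanding the determinant, the coefficient of $z^iw^j$ in $P(z,w)$ is a signed sum over those dimer covers of $\Gamma$ whose homology class (relative to a fixed reference cover) is $(i,j)$; the defining property of a Kasteleyn sign $[\kappa]$ is precisely that all the terms in this sum carry a common sign, so that each coefficient $c_{ij}$ equals $\varepsilon_{ij}$ times a strictly positive real number, where $\varepsilon_{ij}\in\{\pm1\}$ depends only on $(i,j)$ and is a quadratic function of $(i,j)$ (here the positivity of the edge weights $\wt$ is essential). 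In particular, up to a real monomial factor and a substitution $(z,w)\mapsto(\pm z,\pm w)$, the polynomial $P$ is a \emph{real} Laurent polynomial whose coefficients are as sign-coherent as the Kasteleyn structure permits.

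Second, I would invoke the characterization of Harnack curves due to Mikhalkin and Mikhalkin--Rullg\r{a}rd: a real curve with Newton polygon $N$ is Harnack if and only if its real locus realizes the maximal number of ovals and meets each of the toric coordinate lines (one per primitive edge direction of $\partial N$) in the maximal number of real points, arranged in the cyclic order prescribed by $\partial N$; equivalently, $\mathbb{A}(C^\circ)$ has the maximal area $\pi^2\operatorname{Area}(N)$, which is in turn equivalent to the $2\colon 1$ property to be established. Thus it suffices to produce this maximal real configuration, and the setup reduces to a reality count for the real polynomial $P$.

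Third -- and this is the crux -- I would count real intersections along real slices. Fixing a real value $z=z_0$ in the appropriate sign sector, $P(z_0,\cdot)$ is a one-variable real Laurent polynomial whose degree span equals the vertical extent of $N$, and the $2\colon 1$ property over a vertical slice of the amoeba interior is equivalent to the assertion that \emph{all} of its roots $w$ are real and lie in the correct sign sector, for $z_0$ in the relevant interval (\emph{total reality}). I would prove this by exhibiting $K(z_0,w)$, after the normalization above, as a real matrix whose determinant, viewed as a function of real $w$, changes sign the maximal number of times; the sign-coherence of the normalized coefficients together with the minimality of $\Gamma$ (no self-intersecting zig-zag lifts, no parallel bigons) should force the requisite sign alternations via a Descartes-type / argument-principle count, leaving no room for complex conjugate pairs. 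The main obstacle is exactly here: ruling out complex roots is not formal, as it is where the combinatorial positivity of the dimer model must be upgraded to an analytic reality statement, and doing so uniformly in $z_0$ across the whole amoeba -- while matching the ovals thereby produced to the toric-infinity points dictated by the zig-zag directions -- is the technical heart of the argument.

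Finally, once every vertical and (symmetrically) horizontal real slice is shown to meet $C^\circ$ in the maximal number of real points in the prescribed order, the curve realizes the maximal real structure; by the Mikhalkin--Rullg\r{a}rd criterion $C^\circ$ is then Harnack, so $\operatorname{Log}$ is $2\colon 1$ over the interior of $\mathbb{A}(C^\circ)$, as required.
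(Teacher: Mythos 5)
First, a point of order: the paper does not prove this theorem --- it is imported verbatim from \cite{KSh,KOS} --- so your attempt has to be judged against the argument in those references rather than against anything in the present text. Your first two steps (sign-coherence of the coefficients of $P$ via Kasteleyn theory, and the Mikhalkin--Rullg\aa rd characterization of Harnack curves by the $2\colon 1$ property of $\operatorname{Log}$) are sound and are genuine ingredients of the known proof. The fatal problem is your third step. The equivalence you assert there --- that the $2\colon 1$ property over a vertical slice of the amoeba is the same as total reality of the one-variable polynomial $P(z_0,\cdot)$ for real $z_0$ --- is false, because the fiber of $\operatorname{Log}$ over a point $(x_0,y)$ consists of all points of $C^\circ$ with $|z|=e^{x_0}$, $|w|=e^{y}$: the argument of $z$ ranges over a whole circle, not over the two real values $\pm e^{x_0}$. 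A concrete counterexample is the curve $z+z^{-1}+w+w^{-1}=5$, a genus-one Harnack curve which (by the surjectivity part of the Kenyon--Okounkov correspondence \cite{KO}) is the spectral curve of an actual dimer model. Take $z_0=3$: then $w+w^{-1}=5/3$, whose two roots are a non-real conjugate pair on the unit circle, so the real slice $z=3$ contains no real point of the curve, even though the vertical line $x=\log 3$ passes through the interior of the amoeba; the two preimages of $(\log 3,0)$ are exactly this conjugate pair, which is perfectly consistent with Harnackness. Conversely, total reality of slices (in both coordinate directions, in the ``correct'' sign sectors) does not imply the $2\colon 1$ property either: the product $(z+z^{-1}+w+w^{-1}-5)(z+z^{-1}+w+w^{-1}-6)$, and small perturbations of it such as adding $\varepsilon z$ to restore irreducibility, have all such slices totally real, yet the amoeba map is $4\colon 1$ over the open region where the two amoebas overlap. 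So your reduction fails in both directions, and the step you yourself flag as ``the technical heart'' is aimed at a statement that is neither equivalent to Harnackness nor, in the direction you need, true.

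The actual proof in \cite{KOS} replaces your real slices by torus fibers, which is what the $2\colon 1$ statement genuinely concerns: one shows that $C^\circ$ meets every torus $\{|z|=r,\ |w|=s\}$ in at most two points, necessarily a complex-conjugate pair. The trick is to absorb $(\log r,\log s)$ into the edge weights (multiplying the weights of edges crossing the two reference cycles by $r$ and $s$), so that the claim becomes: the characteristic polynomial of a dimer model with positive weights has at most two zeros on the unit torus. That statement is then proved by Kasteleyn theory, using the expression of the manifestly positive dimer partition function as a signed combination of the four real numbers $P(\pm 1,\pm 1)$ and the inequalities among these values forced by positivity of the individual dimer weights; Harnackness then follows from the Mikhalkin--Rullg\aa rd criterion you quoted. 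If you want to salvage your outline, this is the substitution to make: the positivity input must enter through partition-function inequalities on tori, not through Descartes-type sign counts on real one-variable slices.
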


The Newton polygon $N$ defines a compactification 
\[
\mathcal N:= \overline{ \{[z^i w^j]_{(i,j) \in N \cap \Z^2}: (z,w) \in (\C^\times)^2 \}} \subset \C\mathbb P^{\# N \cap \Z^2 -1}
 \]
of $(\C^\times)^2$ called a \textit{toric surface}. The boundary 
\[
\mathcal N - (\C^\times)^2 = \bigcup_{S \text{ side of }N}D_S	
\]
where each $D_S \cong \C \mathbb P^1$ is called a \textit{line at infinity}. Taking closure, we get a compactification 
\[
C:= \overline {C^\circ} \subset \mathcal N 
\]
called the \textit{spectral curve}. The points $C-C^\circ$ are called \textit{points at infinity}; we call the points in $C \cap D_S$ the \textit{points at infinity corresponding to the side $S$}. The number of such points is equal to the number of primitive vectors in $S$. 

Points at infinity corresponding to the side $S$ are in bijection with \textit{tentacles} of the amoeba in the direction of the outward-pointing normal to $S$. Each tentacle is asymptotic to a line $i x + jy+c=0$ where $(i,j)$ is the primitive vector along the side $S$ of $N$. There is a bijection \[\nu_{\Gamma} : Z_{\Gamma} \ra \bigsqcup_{S \text{ side of }N} C \cap D_S\] which can be described as follows: The tentacle corresponding to a zig-zag path $\alpha$ with $[\alpha]$ along $S$ is the one asymptotic to the line 
\begin{equation}\label{eq:bij}
	ix+jy+\log|X_\alpha([\wt])|=0.
\end{equation}
Recall that $Z_{\Gamma,S}$ is the subset of zig-zag paths corresponding to $S$. Note that by definition, this bijection has the property that $\nu_\Gamma(Z_{\Gamma,S}) = C \cap D_S$.

The real locus of a Harnack curve has $g+1$ components called \textit{ovals}, one of which corresponds to the boundary of $N$ and is called the \textit{outer} oval, and $g$ \textit{interior} ovals that are in bijection with the interior lattice points in $N$.  

By definition of the spectral curve, the Kasteleyn matrix $K_{(\Gamma, \wt, \kappa)}(z,w)$ has nonzero kernel and cokernel over $C^\circ$. If $(z,w)$ is a smooth point of $C^\circ$ then the kernel and cokernel of $K_{(\Gamma, \wt, \kappa)}(z,w)$ are $1$-dimensional. Hence, if $C^\circ$ is smooth then the kernel and cokernel are line bundles. 

For a white vertex $\bw$, the image of the function $\delta_{\bw} \in \C^{W}$ in $\operatorname{coker} K_{(\Gamma, \wt, \kappa)}(z,w)$ defines a section of the cokernel. This section vanishes on a set of $g$ points $\{(p_i,q_i)\}_{i=1}^g$ in $\C^\circ$, where $g$ is the arithmetic genus of $C$ and is equal to the number of interior points in $N \cap \Z^2$. We call 
\begin{equation} \label{eq:div}
D_\bw := \sum_{i=1}^g (p_i,q_i)	
\end{equation}
the \textit{divisor of $\bw$}. By the same construction applied to $K_{(\Gamma, \wt, \kappa)}(z,w)^T$, we define degree-$g$ divisors $D_\b$ for black vertices $\b \in B$. These divisors can be described more explicitly as follows. Let $Q_{(\Gamma, \wt, \kappa)}(z,w)$ denote the cofactor matrix of $K_{(\Gamma, \wt, \kappa)}(z,w)$. The divisor $D_\b$ (resp., $D_\bw$) is given by the vanishing of the $\b$-row (resp., $\bw$-column) of $Q_{(\Gamma, \wt, \kappa)}(z,w)$.

 A degree-$g$ divisor $D$ is called a \textit{standard} divisor if $D$ contains one point in each interior oval. If $C$ is smooth, the set of standard divisors is a component of the real locus of the Jacobian of $C$ and is a $g$-dimensional real torus.

\begin{theorem}[\cite{KO}]
	Let $(\Gamma,[\wt])$ be a dimer model in $\T$ and let $\mathrm{v}$ be a vertex of $\Gamma$. The spectral curve $C$ is a Harnack curve and the divisor $D_\mathrm{v}$ is a standard divisor. 
\end{theorem}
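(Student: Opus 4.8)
The Harnack property of $C$ is exactly the theorem of \cite{KSh,KOS} recalled above, so it remains to prove that $D_\bw$ is standard (the case of a black vertex being identical after transposing $K_{(\Gamma,\wt,\kappa)}$). The plan is to assume first that $C$ is smooth, the general case following by a limiting argument within the family of Harnack curves with Newton polygon $N$. Write $g$ for the arithmetic genus of $C$, which equals the number of interior lattice points of $N$; since $C$ is Harnack (an M-curve), its real locus $C(\R)$ consists of exactly $g+1$ ovals, namely the outer oval together with $g$ interior ovals in bijection with the interior lattice points. By \eqref{eq:div}, $D_\bw$ has degree $g$ and is supported in $C^\circ$, so the entire content of the claim is that these $g$ points are distributed one to each interior oval.

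First I would record degree and reality. On the smooth curve $C$ the cokernel of $K_{(\Gamma,\wt,\kappa)}$ is a line bundle, the section determined by $\delta_\bw$ has zero divisor of degree equal to $g$, consistent with \eqref{eq:div}; concretely $D_\bw$ is the common zero locus on $C^\circ$ of the entries of the $\bw$-column of the adjugate $Q_{(\Gamma,\wt,\kappa)}$. Because the edge weights are positive real and the Kasteleyn signs $\kappa$ take values in $\{\pm 1\}$, the Kasteleyn matrix satisfies $\overline{K_{(\Gamma,\wt,\kappa)}(z,w)} = K_{(\Gamma,\wt,\kappa)}(\bar z,\bar w)$; hence $P_{(\Gamma,\wt,\kappa)}$ and every entry of $Q_{(\Gamma,\wt,\kappa)}$ have real coefficients. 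Consequently $C$ is invariant under the complex conjugation $(z,w)\mapsto(\bar z,\bar w)$ and $D_\bw$ is invariant as a divisor, so its non-real points occur in conjugate pairs.

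The heart of the argument is the localization of $D_\bw$ on the ovals, and here is where positivity is used decisively. It suffices to show that each interior oval carries at least one point of $D_\bw$: since there are $g$ interior ovals and $\deg D_\bw = g$, the pigeonhole principle then forces exactly one point on each interior oval, with no point on the outer oval and no non-real point. To establish the ``at least one'' statement I would restrict the cokernel section to the real locus and count its zeros on each oval by an argument-principle (winding) computation along the smooth circle. Positivity of $\wt$ pins down the signs of the relevant real quantities: along the outer oval---whose image under $\operatorname{Log}$ is the outer boundary of the amoeba $\mathbb A(C^\circ)$, controlled near toric infinity by the nonvanishing asymptotics attached via $\nu_\Gamma$ and \eqref{eq:bij}---the section has winding zero and hence no zero, whereas around each interior oval (bounding a bounded complementary component of the amoeba attached to an interior lattice point) the winding is forced to be nonzero, producing a zero. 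An equivalent and perhaps cleaner route is a connectedness argument: $\mathcal X_\Gamma(\Rpos)$ is connected, the assignment $[\wt]\mapsto(C,D_\bw)$ varies continuously, and in a continuous family of smooth Harnack curves a real degree-$g$ divisor cannot change its distribution among the disjoint oval circles without a point leaving $C(\R)$ (that is, colliding and becoming complex), which reality and smoothness forbid; thus standardness is a locally constant condition and holds on all of $\mathcal X_\Gamma(\Rpos)$ once verified at a single explicit weight.

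The main obstacle is precisely this localization step. The degree count and the reality of $D_\bw$ are formal, and the Harnack property is imported from \cite{KSh,KOS}; but the assertion that the $g$ divisor points avoid the outer oval and populate each interior oval exactly once genuinely uses both the maximality of the real locus and the positivity of the dimer weights, and it is the interaction of these two inputs---via either the winding computation or the deformation invariance---that constitutes the technical core of the theorem.
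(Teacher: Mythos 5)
This theorem is imported by the paper from \cite{KO} with no internal proof, so the only meaningful comparison is with the known Kenyon--Okounkov argument. Your skeleton matches that argument's outer structure: the Harnack property is the separate theorem of \cite{KSh,KOS}; $D_\bw$ has degree $g$ by \eqref{eq:div} and is conjugation-invariant because $K_{(\Gamma,\wt,\kappa)}$ has real entries; and the whole problem reduces, by the degree count, to showing that each of the $g$ compact ovals carries at least one point of $D_\bw$. That reduction is correct and is indeed how the result is organized.

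The genuine gap is that the ``at least one point per oval'' step is never actually proved, and the two routes you offer do not close it as stated. The winding-number route is pure assertion: you claim positivity forces nonzero winding around each interior oval and zero winding around the outer oval, but give no computation, and you yourself flag this as the technical core. The deformation route rests on a false principle: reality and smoothness do \emph{not} prevent two real divisor points on an oval from colliding and escaping into a complex-conjugate pair --- that is exactly what zeros of real sections generically do. What rescues both routes is a parity statement you never formulate: since zeros of a real section leave or enter $C(\R)$ only in conjugate pairs, the number of divisor points on each compact oval is invariant mod $2$ in a smooth family; equivalently, the real cokernel line bundle restricted to each compact oval is nonorientable, so every real section has an \emph{odd} number of zeros there. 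Odd on each of the $g$ ovals, with total degree $g$, gives exactly one per oval and no complex pairs. To use this you also need a verified base point (an explicit weight, such as the example computed at Figure~\ref{fig:kasmat}, where one checks the divisor directly), which your proposal invokes (``once verified at a single explicit weight'') but never exhibits. Finally, your deformation argument needs connectedness of the locus of weights where $C$ stays smooth; this is not automatic, since Harnack curves in the family do degenerate when ovals contract to isolated nodes (cf.\ Remark~\ref{rem:singular}), and your ``limiting argument'' for the singular case is likewise only asserted.
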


\begin{remark}\label{rem:singular}
	The only possible singularities of Harnack curves occur when some ovals shrink to zero size forming isolated real nodes or when tentacles in the same direction merge \cite{Mik}. Suppose $C$ is singular with $b$ nodes. Let $\tilde C$ denote its desingularization which has genus $\tilde g = g-b$. If $\pi:\tilde C \ra C$ is the canonical map gluing pairs of points into nodes, then $\pi^{-1}(D_{\rm v})$ has $g+b$ points: each smooth point lifts to one point and each node to two points. Let $\tilde D_{\rm v}$ denote the $\tilde g$ points of $\pi^{-1}(D_{\rm v})$ corresponding to the smooth points. All the constructions work for singular $C$ as for smooth $C$ if we replace $C$ with $\tilde C$ and $D_{\rm v}$ with $\tilde D_{\rm v}$, and we do this without further mention below.
\end{remark}

\begin{figure}
	\begin{tikzpicture}[scale=0.5]
		\def\ep{0.3}
		\def\xsh{1}
	
		\begin{scope}[shift={(0,0)},rotate=0]

			\fill[black!5] (0,0) rectangle (8,8);
			\draw[gray,dashed] (0,0) rectangle (8,8);

			\coordinate[bvert] (b1) at (1,5);
			\coordinate[bvert] (b2) at (1+2,5+2);
			\coordinate[bvert,label=above:$\b$] (b3) at (5,3);
			\coordinate[bvert] (b4) at (7,1);
  
			\coordinate[wvert] (w1) at (1,7);
			\coordinate[wvert,label=right:$\bw$] (w2) at (1+2,5);
			\coordinate[wvert] (w3) at (5,1);
			\coordinate[wvert] (w4) at (7,3);
  
			\draw[] (b1) -- (0,4) 
			(b1) --node[below]{$s_2$} (w2)
			(b1) --node[left]{$c_2$} (w1);
  
			\draw[] (b2) --node[above]{$s_2$} (w1)
			(b2) -- (4,8)
			(b2) --node[right]{$-c_2$} (w2);
  
			\draw[] (b3) --node[left]{$-s_1$} (w3)
			(b3) -- (w2)
			(b3) --node[above]{$c_1$} (w4) ;
  
			\draw[] (b4) --node[below]{$c_1$} (w3)
			(b4) -- (8,0)
			(b4) --node[right]{$s_1$} (w4) ;
  
			\draw[] (w4) -- (8,4)
			(0,8) -- (w1)
			(4,0) -- (w3);
			
			\node[] (no) at (4,8.5) {$w$};
			\node[] (no) at (8.5,4) {$\frac 1 z$};
			\node[] (no) at (8.5,0) {$\frac z w$};
			\node[](no) at (4,-1) {(a)};
		   \end{scope}
\node (no) at (18,4) {\includegraphics[scale=0.37]{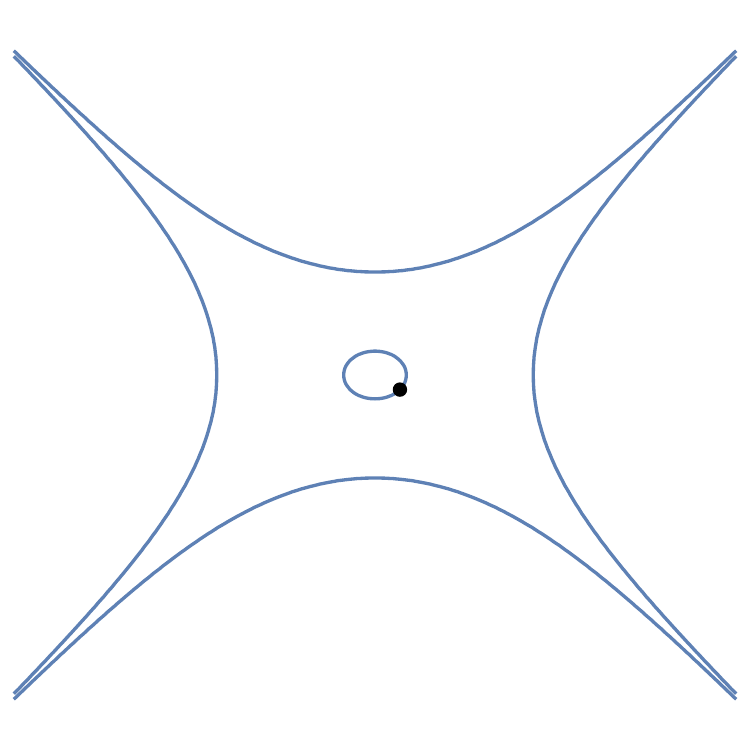}};
			\node[](no) at (18,-1) {(b)};
	
	  \end{tikzpicture}
	  \caption{(a) A Kasteleyn sign and $\phi$ for the dimer model in Figure~\ref{fig:isingintro}(b) and (b) the amoeba of $C$ and the standard divisor $D_{\bw}$ for $c_1=\frac{1}{\sqrt 2}$ and $c_2 = \frac{\sqrt 3}{2}$.} \label{fig:kasmat}
\end{figure}

\begin{example}
	Let $(G^\square,[\wt^\square])$ be the dimer model from Figure~\ref{fig:isingintro}(c). Consider the Kasteleyn sign $\kappa$ and $\phi$ shown in Figure~\ref{fig:kasmat}. The Kasteleyn matrix is (with vertices ordered from left to right)
\[
	K_{(G^\square, \wt^\square,\kappa)}(z,w)=
\begin{bmatrix}
c_2 & s_2&0&\frac z w\\
s_2 &-c_2&1&0\\
0&w&-s_1&c_1\\
 \frac 1 z&0&c_1&s_1
\end{bmatrix}	,
\]
and the characteristic polynomial is 
\begin{align*}
	P_{(G^\square, \wt^\square,\kappa)}(z,w)=1+c_{1}^{2} c_{2}^{2} + c_{2}^{2} s_{1}^{2} + c_{1}^{2} s_{2}^{2} + s_{1}^{2} s_{2}^{2} - c_{2} s_{1}w -  c_{1} s_{2}z - \frac{ c_{2} s_{1}}{w} - \frac{c_{1} s_{2}}{z}.	
\end{align*}

The adjugate matrix $Q_{(G^\square, \wt^\square,\kappa)}(z,w)$ is 
\[
\begin{bmatrix}
	c_1^2 c_2+c_2 s_1^2-s_1 w & c_1^2 s_2-c_1 z+s_1^2 s_2 & s_1
	s_2-\frac{c_1 c_2 z}{w} & -c_1 s_2-\frac{c_2 s_1 z}{w}+z \\
	c_1^2 s_2-\frac{c_1}{z}+s_1^2 s_2 & -c_1^2 c_2-c_2 s_1^2+\frac{s_1}{w} &
	-\frac{c_1 s_2 z}{w}-c_2 s_1+\frac{1}{w} & c_1 c_2-\frac{s_1 s_2 z}{w} \\
	s_1 s_2 w-\frac{c_1 c_2}{z} & -\frac{c_1 s_2}{z}-c_2 s_1 w+1 & -c_2^2
	s_1+\frac{c_2}{w}-s_1 s_2^2 & c_1 c_2^2+c_1 s_2^2-s_2 z \\
	-c_1 s_2 w-\frac{c_2 s_1}{z}+\frac{w}{z} & c_1 c_2 w-\frac{s_1 s_2}{z} & c_1
	c_2^2+c_1 s_2^2-\frac{s_2}{z} & c_2^2 s_1-c_2 w+s_1 s_2^2 
\end{bmatrix}.
\]
Setting the $\bw$-column and $\b$-row respectively equal to $0$ (with $\bw$ and $\b$ as shown in Figure~\ref{fig:kasmat}), we get 
\[
D_{\bw} = \left(\frac{c_1}{s_2(c_1^2+s_1^2)},\frac{s_1}{c_2(c_1^2+s_1^2)} \right) \text{ and }D_\b= \left(\frac{s_2}{c_1(c_2^2+s_2^2)},\frac{c_2}{s_1(c_2^2+s_2^2)}\right).
\]
Note that $P_{(G^\square, \wt^\square,\kappa)}(z,w)$ is invariant under the involution $\sigma:(z,w) \mapsto (z^{-1},w^{-1})$. Using $c_1^2+s_1^2=c_2^2+s_2^2=1$, we also see that $D_{\b}=\sigma(D_\bw)$. Both of these properties are true for all spectral curves and standard divisors that arise from the Ising model and characterize them as we will prove in Theorem~\ref{thm:main}.
\end{example}

A \textit{spectral data} associated to $\Gamma$ is a triple $(C,D,\nu_\Gamma)$ where
\begin{enumerate}
	\item $C$ is a Harnack curve with Newton polygon $N_\Gamma$,
	\item $D$ is a standard divisor on $C$,
	\item $\nu_{\Gamma} : Z_{\Gamma} \ra \bigsqcup_{S \text{ side of }N} C \cap D_S$ is a bijection between zig-zag paths and points at infinity such that for every side $S$ of $N_\Gamma$, $\nu_\Gamma(Z_{\Gamma,S}) = C \cap D_S$.
\end{enumerate} 
Let $\mathcal S_\Gamma$ denote the set of spectral data associated to $\Gamma$.

The \textit{spectral transform} is the map 
\[
\lambda_{(\Gamma,\mathrm{v})}:\mathcal X_\Gamma(\Rpos) \times H^1(\T,\{\pm 1\}) \ra \mathcal S_\Gamma	
\]
sending $([\wt],[\kappa])$ to the spectral data $(C,D_{\mathrm v},\nu_\Gamma)$ where
\begin{enumerate} 
	\item $C$ is the spectral curve,
	\item $D_{\mathrm v}$ is the divisor as in (\ref{eq:div}),
	\item $\nu_{\Gamma}$ is the bijection in (\ref{eq:bij}).
\end{enumerate} 
Although the Kasteleyn matrix depends on the choice of cochains $\wt$ and $\kappa$ representing $[\wt]$ and $[\kappa]$, the spectral data only depends on the cohomology classes $[\wt]$ and $[\kappa]$. 

\begin{theorem}[\cite{KO}]\label{thm:komain} The spectral transform is a bijection.
\end{theorem}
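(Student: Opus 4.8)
The plan is to prove bijectivity by constructing an explicit \emph{inverse spectral transform} that recovers the pair $([\wt],[\kappa])$ from a triple $(C,D,\nu_\Gamma) \in \mathcal S_\Gamma$. The forward map $\lambda_{(\Gamma,\mathrm v)}$ is already well-defined and lands in $\mathcal S_\Gamma$ by the Kenyon--Okounkov theorem quoted above (its output is a Harnack curve with a standard divisor), so it remains to build the inverse and to check that the two constructions are mutually inverse. Throughout I would pass to the desingularization $\tilde C$ of genus $\tilde g$ as in Remark~\ref{rem:singular}, replacing $D$ by its degree-$\tilde g$ lift $\tilde D$, which carries one point on each oval.

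The core of the inverse is the reconstruction of the kernel of the Kasteleyn matrix as a \emph{discrete wave (Baker--Akhiezer) function}. Over the smooth locus of $C^\circ$ the relation $K_{(\Gamma,\wt,\kappa)}(z,w)\,\psi = 0$ has a one-dimensional kernel spanned by a column vector $\psi = (\psi_\b)_{\b \in B}$ whose entries extend to meromorphic functions on $\tilde C$; concretely $\psi_\b$ is the $\b$-column of the adjugate $Q_{(\Gamma,\wt,\kappa)}$, so its poles are governed by $\tilde D$ and its zeros encode the combinatorics at $\b$. First I would pin down the divisor class of $\psi_\b$: its pole divisor is bounded by $\tilde D$ (degree $\tilde g$), while at each point at infinity $\nu_\Gamma(\alpha)$ its order is prescribed by how the zig-zag path $\alpha$ winds past the edges at $\b$, information read off from the monomials $\phi(e)$ and the ordering $\nu_\Gamma$. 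Because a standard divisor is nonspecial, Riemann--Roch makes the space of functions with these constraints exactly one-dimensional, so each $\psi_\b$ is determined up to a scalar, and the tuple of scalars is precisely the gauge freedom $f : B \sqcup W \to \C^\times$. Thus $\psi$ is canonical as a gauge class.

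Given $\psi$, I would then recover the weights and the Kasteleyn sign from the kernel relations themselves. The vanishing of the $\bw$-coordinate of $K\psi$,
\[
\sum_{e=\{\b,\bw\}} \wt(e)\,\kappa(e)\,\phi(e)\,\psi_\b = 0,
\]
is, for each white vertex $\bw$, a single identity among the known functions $\phi(e)\psi_\b$ on $\tilde C$; since the terms have distinct leading behavior, it determines the constants $\wt(e)\kappa(e)$ up to the gauge scalars, hence simultaneously the class $[\wt]$ and the class $[\kappa]$. Positivity of the reconstructed weights, i.e.\ that they land in $\mathcal X_\Gamma(\Rpos)$, should follow from the Harnack property together with the fact that $\tilde D$ meets each oval exactly once, which fixes the sign of $\psi_\b$ on each real arc.

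Finally I would check that the two maps compose to the identity: forward-then-inverse returns the same curve, divisor, and ordering by construction, while inverse-then-forward returns the same weight class because, by the uniqueness in the Riemann--Roch step, the reconstructed $\psi$ is the genuine Kasteleyn kernel of the reconstructed data. The hard part will be the Riemann--Roch step: proving that standard divisors are always nonspecial (so that $\psi$ exists and is unique up to scale) and that the prescribed orders at toric infinity are globally consistent, i.e.\ that the degree bookkeeping at the points $C \cap D_S$ matches the interior genus $\tilde g$. A cleaner alternative that avoids part of this bookkeeping is a dimension count combined with injectivity: by the basis~(\ref{h1:basis}) one has $\dim \mathcal X_\Gamma(\Rpos) = |F|+1 = 2\,\mathrm{Area}(N)+1 = 2g+b-1$ (where $b = |\partial N \cap \Z^2|$, by Pick's theorem), which matches $\dim \mathcal S_\Gamma = (g+b-1)+g$ coming from the moduli of Harnack curves \cite{Mik} and the $g$-dimensional real Jacobian; so once the inverse construction shows the map is injective, a properness argument between manifolds of equal dimension upgrades it to the desired bijection.
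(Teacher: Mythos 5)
This statement is not proved in the paper at all: it is quoted as background, with the proof deferred to the citation \cite{KO}. So there is no internal argument to compare against; the relevant benchmark is Kenyon--Okounkov's original proof (and Fock's later explicit inverse construction \cite{Fock}). Measured against that, your outline has the right architecture --- in fact it contains both known strategies: the explicit Baker--Akhiezer reconstruction of the Kasteleyn kernel is essentially Fock's route, while your ``cleaner alternative'' (injectivity plus a dimension count plus properness) is close to what Kenyon and Okounkov actually do. Your dimension bookkeeping is correct: $\dim \mathcal X_\Gamma(\Rpos) = \#F+1 = 2g+b-1$ matches the $(g+b-1)+g$ parameters of curves and divisors.

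However, as a proof the proposal defers exactly the steps that constitute the mathematical content, so it has genuine gaps. First, the Riemann--Roch step requires that standard divisors on Harnack curves are nonspecial; this is not a formal fact but a theorem in which the Harnack condition (one divisor point per compact oval, totally real structure of the curve) enters essentially --- you name it as ``the hard part'' but give no argument, and without it neither existence nor uniqueness of $\psi$ is available. Second, positivity of the reconstructed weights is asserted to ``follow from the Harnack property''; in fact this is a delicate sign analysis: one must show the product $\wt(e)\kappa(e)$ read off from the kernel relations factors as a positive cocycle times a sign cocycle satisfying the Kasteleyn face condition $[\kappa](f)=(-1)^{\#\partial f/2+1}$, which is where the alternation of signs of $\psi$ along real ovals must be established. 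Third, your claim that the single relation $\sum_e \wt(e)\kappa(e)\phi(e)\psi_\b=0$ at each white vertex determines the coefficients up to scale presupposes that every proper subset of the functions $\{\phi(e)\psi_\b\}_{e\ni\bw}$ is linearly independent --- again a consequence of the pole/zero bookkeeping you postpone. Finally, in the alternative route, ``injective $+$ proper $+$ equal dimensions'' does not give surjectivity: invariance of domain makes the image open and properness makes it closed, but you must also know the target $\mathcal S_\Gamma$ is \emph{connected}, i.e.\ that the space of Harnack curves with given Newton polygon together with standard divisors is connected; this connectivity is a separate, nontrivial theorem (established by Kenyon--Okounkov building on \cite{Mik}) and is missing from your argument.
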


\begin{figure}
	\begin{tikzpicture}[scale=0.6]
		\def\ep{0.3}
		
		\begin{scope}[shift={(3,0)},rotate=45]
			\def\r{2};
			\fill[black!5] (0,0) circle (1.05*\r cm);
			\coordinate[wvert] (n1) at (0:\r);
			\coordinate[wvert] (n2) at (0+90:\r);
			\coordinate[wvert] (n3) at (0+180:\r);
			\coordinate[wvert] (n4) at (0+270:\r);
			
			\coordinate[bvert] (b1) at (0:0.5*\r);
			\coordinate[bvert] (b2) at (0+180:0.5*\r);
			
			\draw[-]
			(n1) -- node[above]{$1$} (b1) -- node[above]{$\kappa_2 $} (n2)-- node[left]{$\kappa_1 $} (b2)-- node[below]{$\kappa_4 $}(n4) -- node[right]{$\kappa_3 $} (b1)
			(b2)--node[below]{$1$}(n3)
			;
		 
		\end{scope}
		
		\begin{scope}[shift={(-4,0)},rotate=90+45]
			\def\r{2};
			\fill[black!5] (0,0) circle (1.05*\r cm);
			\coordinate[wvert] (n1) at (0:\r);
			\coordinate[wvert] (n2) at (0+90:\r);
			\coordinate[wvert] (n3) at (0+180:\r);
			\coordinate[wvert] (n4) at (0+270:\r);
			
			\coordinate[bvert] (b1) at (0:0.5*\r);
			\coordinate[bvert] (b2) at (0+180:0.5*\r);
			
			\draw[-]
			(n1) -- node[above]{$1$} (b1) -- node[left]{$\kappa_1 $} (n2)-- node[below]{$\kappa_4 $} (b2)-- node[right]{$\kappa_3 $}(n4) -- node[above]{$\kappa_2 $} (b1)
			(b2)--node[below]{$1$}(n3)
			;
			\coordinate[] (t1) at (15:\r);
			\coordinate[] (t2) at (120-45:\r);
			\coordinate[] (t3) at (150-45:\r);
			\coordinate[] (t4) at (210-45:\r);
			\coordinate[] (t5) at (240-45:\r);
			\coordinate[] (t6) at (300-45:\r);
			\coordinate[] (t7) at (330-45:\r);
			\coordinate[] (t8) at (30-45:\r);

		  \end{scope}

		  \draw[] (6,-2) -- (6,2);

\node[](no) at (-0.5,0) {$\longleftrightarrow$};		

		\begin{scope}[shift={(12-2-1,0)}
			,rotate=-45
			]
			\def\r{2};
			\fill[black!5] (0,0) circle (\r cm);

			\coordinate[] (t1) at (15:\r);
			\coordinate[] (t2) at (120-45:\r);
			\coordinate[] (t3) at (150-45:\r);
			\coordinate[] (t4) at (210-45:\r);
			\coordinate[] (t5) at (240-45:\r);
			\coordinate[] (t6) at (300-45:\r);
			\coordinate[] (t7) at (330-45:\r);
			\coordinate[] (t8) at (30-45:\r);
			
			\coordinate[] (m1) at (0:1);
			\coordinate[] (m2) at (0+90:1);
			\coordinate[] (m3) at (0+180:1);
			\coordinate[] (m4) at (0+270:1);

			\coordinate[wvert] (n1) at (0:\r);
			\coordinate[wvert] (n2) at (0+90:\r);
			\coordinate[wvert] (n3) at (0+180:\r);
			\coordinate[wvert] (n4) at (0+270:\r);

			\coordinate[bvert] (n5) at (0,0);
			\draw[] (n5) edge (n2) edge (n3) edge (n4) edge (n1);
			\end{scope}
			\node[](no) at (13-0.5,0) {$\longleftrightarrow$};
				\begin{scope}[shift={(20-2-2,0)},rotate=-45]
			\def\r{2};
			\fill[black!5] (0,0) circle (\r cm);
			\coordinate[wvert] (n1) at (0:\r);
			\coordinate[wvert] (n2) at (0+90:\r);
			\coordinate[wvert] (n3) at (0+180:\r);
			\coordinate[wvert] (n4) at (0+270:\r);
			\coordinate[wvert] (n5) at (0,0);
			
			\coordinate[bvert] (b1) at (-0.5,0.5);
			\coordinate[bvert] (b2) at (0.5,-0.5);
			
			\draw[-]
							(n1)--(b2)--(n4)
							(n2)--(b1)--(n3)
							(b1)--node[left]{$ 1$}(n5)--node[left]{$-1$}(b2)
							;

			\end{scope}   

	  \end{tikzpicture}
	  \caption{Transformation of $\kappa$ under moves. } \label{fig:kas}
\end{figure}
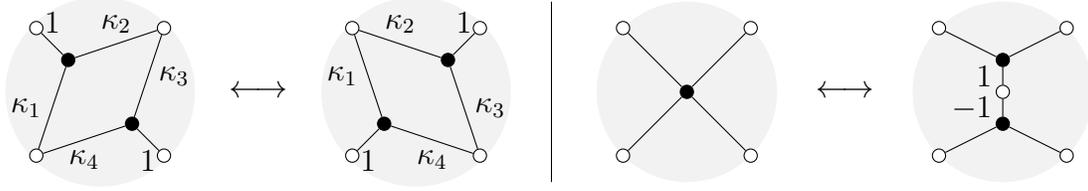

Suppose $\Gamma \stackrel{s}{\rightsquigarrow} \Gamma'$ is a move. By a gauge transformation, we can assume that $\wt$ and $\kappa$ are as shown on one side of Figure~\ref{fig:kas}. Then $\mu_s([\kappa])$ is as shown on the other side. 

Finally, we need a construction of Fock \cite{Fock} called the \textit{discrete Abel map}. Recall that $\pi:\R^2 \ra \T$ is the universal cover. Let $\tilde \Gamma := \pi^{-1}(\Gamma)$ denote the bi-periodic graph in the plane associated to $\Gamma$. The discrete Abel map $\mathbf{d}_{\tilde \Gamma}$ associates to each vertex of $\tilde \Gamma$ a divisor at infinity of $C$ and is defined as follows:
\begin{enumerate}
	\item Let $\mathbf{d}_{\tilde \Gamma}({\bw})=0$ where $\mathrm{w}$ is a fixed white vertex (this is a normalization).
	\item For any edge $\{\b,\bw\}$ contained in the fundamental rectangle with zig-zag paths $\alpha,\beta$ passing through it, \[\mathbf{d}_{\tilde \Gamma}(\b)-\mathbf{d}_{\tilde \Gamma}(\bw) =\nu_\Gamma(\alpha)+\nu_\Gamma(\beta).\]
\end{enumerate}
Then $\mathbf{d}_{\tilde \Gamma}({\mathrm v} + (i,j) ) = \mathbf{d}_{\tilde \Gamma}({\mathrm v})+ \operatorname{div} z^i w^j $ so $\mathbf{d}_{\tilde \Gamma}$ is not well-defined on $\Gamma$ (different lifts of a vertex have different values). However, since $z^i w^j$ is a rational function, the linear equivalence class of $\mathbf{d}_{\tilde \Gamma}(\mathrm{v})$ is well-defined for every vertex $\mathrm v$ of $\Gamma$ independent of the choice of lift of $\mathrm v$ to $\tilde \Gamma$; we denote this linear equivalence class by $\mathbf{d}_\Gamma({\mathrm v})$. 

A move $\Gamma \stackrel{s}{\rightsquigarrow} \Gamma'$ induces a discrete Abel map $\mathbf{d}_{\Gamma'}$ as follows: There is a unique way to define $\mathbf{d}_{\Gamma'}$ on $\Gamma'$ so that $\mathbf{d}_{\Gamma'}(\mathrm v) = \mathbf{d}_{\Gamma}(\mathrm v)$ for vertices $\mathrm v$ that are in both $\Gamma$ and $\Gamma'$; essentially we want the two discrete Abel maps to be consistently normalized.

The following theorem summarizes how the spectral data changes upon doing moves and changing the vertex $\rm v$.

\begin{theorem}\label{thm:mutation_spec} 
	Suppose $\Gamma \stackrel{s}{\rightsquigarrow} \Gamma'$ is a sequence of moves. Then the following diagram commutes
		\[
			\begin{tikzcd}
				\mathcal X_{\Gamma}(\Rpos)\times H^1(\T,\{\pm 1\}) \arrow[r,"\mu_s"]\arrow[d,"{\lambda_{(\Gamma,{\rm v})}}"'] & \mathcal X_{\Gamma'}(\Rpos)\times H^1(\T,\{\pm 1\}) \arrow[d,"{\lambda_{(\Gamma',{\rm v}')}}"]\\
				\mathcal S_{\Gamma} \arrow[r] & \mathcal S_{\Gamma'}
			\end{tikzcd},
			\]
		where the map $\mathcal S_\Gamma \ra \mathcal S_{\Gamma'}$ is given by $(C,D_{\rm v},\nu_{\Gamma}) \mapsto (C,D_{{\rm v}'}, \nu_{\Gamma'})$ where
\begin{enumerate}
	\item $D_{\mathrm{v}'}$ is the unique degree $g$ effective divisor satisfying (where $\sim$ denotes linear equivalence of divisors)
	\begin{enumerate}
		\item $D_{\rm v} + \mathbf{d}_\Gamma(\rm v) \sim D_{{\rm v}'} + \mathbf{d}_{\Gamma'}({\rm v}')$ if $\mathrm{v}$ and $\mathrm{v}'$ have the same color.
		\item $D_{\rm v} + D_{\mathrm{v}'} \sim K_C + \mathbf{d}_{\Gamma}({\mathrm v}) - \mathbf{d}_{\Gamma'}(\mathrm v')$ if $\mathrm v$ is black and $\mathrm v'$ is white, where $K_C$ is the canonical divisor of $C$.
	\end{enumerate}
\item	$
		\nu_{\Gamma'}(s_*(\alpha)) := \nu_{\Gamma}(\alpha).	
		$
\end{enumerate}
\end{theorem}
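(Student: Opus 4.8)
The plan is to establish the three assertions of the theorem—invariance of the curve $C$, the divisor rule, and the zig-zag rule—after two reductions. First, because $\mu_s$, the isomorphism $s_*$, and the linear-equivalence rules all compose, and because every sequence of moves factors into single square and contraction moves, it suffices to treat two elementary situations: (I) the ``empty move'' $\Gamma=\Gamma'$ with the base vertex changed from $\mathrm v$ to $\mathrm v'$, and (II) a single square or contraction move for which the base vertex survives (so $\mathrm v=\mathrm v'$). The general case is recovered by changing the base vertex inside $\Gamma$ to a vertex $\mathrm u$ untouched by the move, performing the move with $\mathrm u$ fixed, and then changing the base vertex inside $\Gamma'$ from $\mathrm u$ to $\mathrm v'$; chaining the resulting linear equivalences yields a linear equivalence between $D_{\mathrm v}$ and $D_{\mathrm v'}$ that one identifies with the same-color relation (a) or the opposite-color relation (b) according to the colors of $\mathrm v$ and $\mathrm v'$. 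Throughout I work on the normalization $\tilde C$, invoking Remark~\ref{rem:singular} to pass between $C$ and $\tilde C$ so that kernels and cokernels of the Kasteleyn operator are genuine line bundles.

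For situation (II) the curve and the zig-zag relation are the easy parts. A single move changes the Kasteleyn matrix only in the small block indexed by the vertices incident to the mutated face $f$; writing $K$ in block form and using the mutated weights prescribed by $\mu_s$ in \eqref{eq:mu} together with the transformed Kasteleyn sign of Figure~\ref{fig:kas}, the determinant is preserved up to a unit. For the contraction move this is a Schur-complement computation eliminating the bivalent vertex, and for the square move it is the standard urban-renewal determinant identity; in both cases $P_{(\Gamma',\wt',\kappa')}=P_{(\Gamma,\wt,\kappa)}$ up to a monomial, so the open curve and hence $C$ is unchanged, the Newton polygons agreeing because $s_*$ preserves homology classes (\eqref{eq:fmut}). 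For the zig-zag rule, \eqref{eq:bij} shows that $\nu_\Gamma(\alpha)$ is determined by $[\alpha]\in H_1(\T,\Z)$ and the intercept $\log|X_\alpha([\wt])|$. Since $[s_*\alpha]=[\alpha]$, it remains to see that $|X_{s_*\alpha}(\mu_s[\wt])|=|X_\alpha([\wt])|$; by \eqref{eq:mu} this is the assertion $\langle\alpha,f\rangle=0$ for every zig-zag path $\alpha$ and every face $f$, namely that zig-zag loop weights are Casimirs of the dimer Poisson structure, which follows by cancelling the local contributions of Figure~\ref{fig:eps} at the corners of $f$. Hence $\nu_{\Gamma'}(s_*\alpha)=\nu_\Gamma(\alpha)$.

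The heart is the divisor rule, which I approach through Fock's discrete Abel map \cite{Fock}. For (II) with surviving base vertex $\mathrm u$, the section $\delta_{\mathrm u}$ of the cokernel of $K$ is carried to the corresponding section after the purely local modification of $K$ away from $f$, so the class $[D_{\mathrm u}]$ in $\operatorname{Jac}(\tilde C)$ is unchanged; normalizing the two discrete Abel maps consistently ($\mathbf d_{\Gamma'}(\mathrm u)=\mathbf d_\Gamma(\mathrm u)$) gives relation (a) with $\mathrm v=\mathrm v'=\mathrm u$. For (I), extend the cokernel of $K$ to a line bundle $\mathcal L$ on $\tilde C$; a white section $\delta_{\bw}$ is then a global section of $\mathcal L$ whose finite zero divisor is $D_{\bw}$ and whose behaviour at the points at toric infinity is recorded, via the monomials $\phi$, by $\mathbf d_\Gamma(\bw)$—this is exactly the content of the defining recursion of the discrete Abel map through the bijection $\nu_\Gamma$. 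Comparing two such sections shows that the class $[D_{\bw}]+\mathbf d_\Gamma(\bw)$ is independent of the white vertex, which is relation (a) for white vertices, and the analogous statement (with the color-appropriate sign) holds for black vertices using $K^{T}$. The opposite-color relation (b) then comes from the duality between the white cokernel bundle of $K$ and the black cokernel bundle of $K^{T}$: these are Serre-dual up to the twist by the data at infinity, and pairing a white section against a black section and separating the finite and infinite parts produces the canonical divisor $K_C$. Finally, since a standard divisor is the unique effective representative of its class, the divisor produced by these equivalences must coincide with the honest divisor $D_{\mathrm v'}$, so the diagram commutes.

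The main obstacle is the last step above: identifying the behaviour of the cokernel sections at the points at toric infinity with the discrete Abel map, and, above all, establishing the Serre-duality relation between the black and white cokernel bundles responsible for the appearance of $K_C$ in the opposite-color case. This requires controlling $\delta_{\mathrm v}$ where the entries of $K$ degenerate at infinity and checking the degree bookkeeping against $\deg D_{\mathrm v}=g$; one must also verify, via Remark~\ref{rem:singular}, that nodes of $C$ split correctly into pairs of points of $\tilde C$ and that the transported divisor remains standard. By comparison, the curve-invariance and zig-zag steps are routine local computations and the reduction of the first paragraph is purely formal.
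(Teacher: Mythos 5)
You should first know how the paper itself handles this statement: it does not prove it. Immediately after the theorem the paper writes that (1)(a) is due to Fock \cite{Fock} and that (1)(b) is well known, citing \cite[Corollary 6.2]{GGK} and \cite[Lemma 32]{BDdT}; the invariance of the curve and of the points at infinity under moves is treated as standard Goncharov--Kenyon material. So there is no in-paper argument to compare against, and your proposal has to be measured against the content of those references. Measured that way, your architecture is the right one and matches how those references proceed: the reduction to (I) base-point changes within a fixed graph and (II) single moves with a surviving vertex is legitimate (and the relations do chain correctly, but only with the sign caveat below); curve invariance via the Schur-complement/urban-renewal determinant identities is correct; and the zig-zag statement via $\langle \alpha,f\rangle=0$ (zig-zag weights are Casimirs of the intersection form, so (\ref{eq:mu}) fixes $X_\alpha$ and hence the asymptote in (\ref{eq:bij})) is correct and routine.

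The genuine gap is the one you flag yourself, and it is not a peripheral one: the two steps you assert --- (i) that the vanishing of the cokernel section $\delta_{\mathrm v}$ at the points at toric infinity is computed by $\mathbf d_\Gamma(\mathrm v)$, and (ii) the Serre-duality/adjunction statement relating the black and white bundles, which is where $K_C$ enters --- are precisely statements (1)(a) and (1)(b), i.e.\ the entire non-routine content of the theorem. Neither follows ``from the defining recursion'' of $\mathbf d_\Gamma$: the recursion is purely combinatorial, and identifying it with vanishing orders of kernel/cokernel vectors requires the asymptotic analysis of the Kasteleyn matrix along a zig-zag path near its point at infinity (for (i)), and for (ii) an argument such as the one in \cite{GGK}, where one studies the divisor of the meromorphic form $Q_{\b\bw}\,dz/(zw\,\partial_w P)$ on $C$ and invokes adjunction on the toric surface to produce $K_C$. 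As a roadmap with citations your text would be acceptable (indeed that is what the paper does); as a self-contained proof it is incomplete exactly at the theorem's core. One further point deserves emphasis: your parenthetical ``with the color-appropriate sign'' for black vertices is not cosmetic but a necessary correction to the statement as printed. With the paper's normalization $\mathbf d_\Gamma(\b)-\mathbf d_\Gamma(\bw)=\nu_\Gamma(\alpha)+\nu_\Gamma(\beta)$ across an edge, applying (1)(b) to two black vertices sharing a white neighbor forces $D_{\b_1}-\mathbf d_\Gamma(\b_1)\sim D_{\b_2}-\mathbf d_\Gamma(\b_2)$, so (1)(a) with a plus sign for both colors is inconsistent with (1)(b) on black--black pairs; it is correct for white--white pairs, which is the only case the paper actually uses (the surviving vertices $\bw$ and $\overline{\b}$ in the proof of Theorem~\ref{thm:main} are both white). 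Your version composes consistently; the literal statement does not.
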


Theorem~\ref{thm:mutation_spec}(1)(a) is due to Fock \cite{Fock} and Theorem~\ref{thm:mutation_spec}(1)(b) is well known; see for example~\cite[Corollary 6.2]{GGK}~or~\cite[Lemma 32]{BDdT}.

\section{Color change}

Bipartite torus graphs have a global transformation which will play an important role in this paper. Let $(\overline\Gamma, [\overline \wt])$ be the weighted bipartite graph obtained from $(\Gamma,[\wt])$ by changing the colors of all the vertices and keeping the weights of all edges the same. For a vertex ${\rm v}$ in $\Gamma$ let $\overline{\rm v}$ denote the corresponding oppositely colored vertex in $\overline \Gamma$. So if $\wt$ is an edge weight representing $[\wt]$ and $e =\{ \b,\bw\}$ is an edge of $\Gamma$, then 
\[
	\overline\wt(\{\overline \bw,\overline \b\})=\wt(\{\b,\bw\})
\]
is an edge weight representing $[\overline \wt]$.

For a cycle $\gamma$ in $\Gamma$, let $\overline \gamma$ denote the corresponding cycle in $\overline \Gamma$. Then $\alpha \mapsto -\overline \alpha$ is a bijection between $Z_\Gamma$ and $Z_{\overline \Gamma}$ so $N_{\overline \Gamma}=-N_\Gamma$.  

If $\kappa$ is a Kasteleyn sign on $\Gamma$, then $\overline \kappa$ is a Kasteleyn sign on $\overline \Gamma$. The Kasteleyn matrices of $\Gamma$ and $\overline \Gamma$ are related by
\begin{equation} \label{eq:sym:kas}
	K_{(\overline \Gamma, \overline \wt, \overline \kappa)}(z,w)_{\overline \b \overline\bw } = K_{(\Gamma, \wt, \kappa)}(z^{-1},w^{-1})_{\bw \b}^T,
\end{equation}
so we get 
\[
	P_{(\overline \Gamma, \overline \wt, \overline \kappa)}(z,w)	= P_{(\Gamma, \wt, \kappa)}(z^{-1},w^{-1}).
\]
Let $\sigma: (\C^\times)^2 \ra (\C^\times)^2$ denote the involution $(z,w) \mapsto (z^{-1},w^{-1})$. Recall that $Q_{(\Gamma, \wt, \kappa)}(z,w)$ denotes the cofactor matrix of $K_{(\Gamma, \wt, \kappa)}(z,w)$ and that for a vertex $\rm v$ of $\Gamma$, the divisor $D_{\rm v}$ is defined by the vanishing of the row or column of $Q_{(\Gamma, \wt, \kappa)}(z,w)$ corresponding to $\rm v$. From (\ref{eq:sym:kas}), we get 
\begin{equation*} 
	Q_{(\overline \Gamma, \overline \wt, \overline \kappa)}(z,w)_{\overline\bw  \overline \b } = Q_{(\Gamma, \wt, \kappa)}(z^{-1},w^{-1})_{\b \bw }^T,
\end{equation*}
which implies that $D_{\overline {\rm v}} = \sigma(D_{\rm v})$.

Since $X_{\overline \alpha}([\overline \wt]) = X_{\alpha}([\wt])^{-1}$, multiplying (\ref{eq:bij}) by $-1$ we get
\begin{equation}\label{eq:nus}
	\nu_{\overline \Gamma}(\overline \alpha) = \sigma(\nu_\Gamma(\alpha)).
\end{equation}

We have shown:
\begin{proposition} \label{prop:colorchange}
	The following diagram commutes: 
	\[
\begin{tikzcd}
	\mathcal X_{\Gamma}(\Rpos)\times H^1(\T,\{\pm 1\}) \arrow[r]\arrow[d,"\lambda_{(\Gamma,\mathrm{v})}"'] &\mathcal X_{\overline \Gamma}(\Rpos) \times H^1(\T,\{\pm 1\}) \arrow[d,"{\lambda_{(\overline\Gamma,\overline{\mathrm{v}})}}"]\\
	\mathcal S_\Gamma \arrow[r] & \mathcal S_{\overline \Gamma}
\end{tikzcd}
\]
where the top map is $([\wt],[\kappa]) \mapsto ([\overline \wt],[\overline \kappa])$ and the bottom map is 
$(C,D,\nu_\Gamma) \mapsto (\sigma(C), \sigma(D), \nu_{\overline \Gamma})$ with $\nu_{\overline \Gamma}$ defined as in (\ref{eq:nus}).
\end{proposition}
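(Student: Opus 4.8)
The plan is to unwind the definition of the spectral transform and check that each of the three pieces of spectral data produced by $\lambda_{(\overline\Gamma,\overline{\mathrm v})}([\overline\wt],[\overline\kappa])$---the curve, the divisor, and the zig-zag bijection---is exactly the image under the prescribed bottom map of the data produced by $\lambda_{(\Gamma,\mathrm v)}([\wt],[\kappa]) = (C,D_{\mathrm v},\nu_\Gamma)$. Thus commutativity reduces to three identities: the spectral curve of $([\overline\wt],[\overline\kappa])$ is $\sigma(C)$; the divisor $D_{\overline{\mathrm v}}$ equals $\sigma(D_{\mathrm v})$; and $\nu_{\overline\Gamma}$ is related to $\nu_\Gamma$ by (\ref{eq:nus}). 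I would treat these one at a time, with the Kasteleyn matrix identity as the common engine driving all three.

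The core computation is the relation (\ref{eq:sym:kas}) between the two Kasteleyn matrices. First I would check that $\overline\kappa$ is a legitimate Kasteleyn sign on $\overline\Gamma$: since the color change leaves the faces of the graph, and in particular their boundary lengths, unchanged, the defining condition $[\kappa](f) = (-1)^{\#\partial f/2 + 1}$ is inherited by $\overline\kappa$. Next, for an edge $e = \{\b,\bw\}$ of $\Gamma$, the color change makes $\overline\bw$ black and $\overline\b$ white, so orienting $e$ from its new black endpoint to its new white endpoint reverses the original orientation; this sends the monomial $\phi(e) = z^i w^j$ to $z^{-i} w^{-j}$, that is, to $\phi(e)$ evaluated at $(z^{-1},w^{-1})$. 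Together with the fact that the color swap interchanges the roles of rows (white) and columns (black), this yields (\ref{eq:sym:kas}). Taking determinants and using that a matrix and its transpose have equal determinant gives $P_{(\overline\Gamma,\overline\wt,\overline\kappa)}(z,w) = P_{(\Gamma,\wt,\kappa)}(z^{-1},w^{-1})$, so the vanishing locus of the new characteristic polynomial is precisely $\sigma(C)$.

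With (\ref{eq:sym:kas}) in hand the remaining two identities follow formally. Passing to cofactor matrices turns the Kasteleyn relation into the corresponding relation for $Q$, so that the $\overline{\mathrm v}$-row (or column) of $Q_{(\overline\Gamma,\overline\wt,\overline\kappa)}$ is the $\mathrm v$-column (or row) of $Q_{(\Gamma,\wt,\kappa)}$ evaluated at $(z^{-1},w^{-1})$; since $D_{\mathrm v}$ is cut out by the vanishing of that row or column, its vanishing locus is carried by $\sigma$, giving $D_{\overline{\mathrm v}} = \sigma(D_{\mathrm v})$. For the zig-zag bijection I would use $X_{\overline\alpha}([\overline\wt]) = X_\alpha([\wt])^{-1}$ and substitute into the asymptotic-line description (\ref{eq:bij}): replacing $\log|X_\alpha([\wt])|$ by its negative reflects the corresponding tentacle through the origin, which is exactly the action of $\sigma$, yielding (\ref{eq:nus}). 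Assembling the three identities proves that the diagram commutes.

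The only real obstacle is the simultaneous bookkeeping in the middle step: one must track the transpose arising from the row/column (i.e.\ white/black) interchange at the same time as the inversion $(z,w)\mapsto(z^{-1},w^{-1})$ coming from the reversed edge orientations, and confirm that this is consistent with the convention defining $D_{\mathrm v}$ by a row when $\mathrm v$ is black and by a column when $\mathrm v$ is white. Once (\ref{eq:sym:kas}) is pinned down with the correct transpose, every subsequent assertion is a mechanical consequence.
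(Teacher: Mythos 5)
Your proposal is correct and takes essentially the same approach as the paper: the Kasteleyn-matrix symmetry (\ref{eq:sym:kas}) is the engine, giving the curve identity by taking determinants, the divisor identity $D_{\overline{\mathrm v}}=\sigma(D_{\mathrm v})$ by passing to cofactor matrices, and the zig-zag identity (\ref{eq:nus}) from $X_{\overline\alpha}([\overline\wt])=X_{\alpha}([\wt])^{-1}$ substituted into (\ref{eq:bij}). The derivation of (\ref{eq:sym:kas}) you spell out (orientation reversal negating the exponents of $\phi$, combined with the white/black row--column swap producing the transpose) is precisely what the paper asserts without proof.
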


\section{A characterization of $\mathcal I_G(\Rpos) \subset \mathcal X_{G^\square}(\Rpos)$}\label{sec:car_weight}

If we apply square moves at all the square faces of $G^\square$, then the resulting graph is $\overline{G^\square}$. Let $s_*:H_1(G^\square,\Z) \ra H_1(\overline{G^\square},\Z)$ denote the induced map of cycles and $\mu:\mathcal X_{G^\square}(\Rpos) \ra \mathcal X_{\overline{G^\square}}(\Rpos)$ be induced map of weights. Therefore, starting with $(G^\square,[\wt])$, there are two ways to get weights on $\overline{G^\square}$: $[\overline \wt]$ and $\mu([\wt])$. The main result of this section is:
\begin{theorem} \label{thm:main1}
	The subset $\mathcal I_{G}(\Rpos) \subset \mathcal X_{G^\square}(\Rpos)$ is the set of weights such that $\mu([\wt]) = [\overline{\wt}]$. 
\end{theorem}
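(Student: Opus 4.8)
The plan is to pass to the gauge-invariant coordinates $X_\gamma$ and to rewrite the equation $\mu([\wt])=[\overline{\wt}]$ as an explicit system in them. The left-hand side $X_\delta(\mu([\wt]))$ is computed by the mutation rule (\ref{eq:mu}), and the right-hand side $X_\delta([\overline{\wt}])$ by the color-change relation $X_{\overline\gamma}([\overline{\wt}])=X_\gamma([\wt])^{-1}$ established in the proof of Proposition~\ref{prop:colorchange}; although that relation is stated there for zig-zag paths, it holds for every cycle $\gamma$, and in particular every face, by the same interchange of numerator and denominator in the alternating product. The $E(G)$ square faces $f_e$ of $G^\square$ are pairwise non-adjacent, so the square moves commute and (\ref{eq:mu}) applies with the product ranging over all of them. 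Since two classes in $\mathcal X_{\overline{G^\square}}=H^1(\overline{G^\square},\C^\times)$ coincide iff they agree on the basis (\ref{h1:basis}), it suffices to match $\mu([\wt])$ and $[\overline{\wt}]$ on $X_{f'}$ for all faces $f'$ of $\overline{G^\square}$ but one and on the two loop coordinates $X_a,X_b$.

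I would first dispose of the inclusion $\mathcal I_G(\Rpos)\subseteq\mathcal F$, where $\mathcal F:=\{[\wt]:\mu([\wt])=[\overline{\wt}]\}$: this is the computation already sketched, namely that for $[\wt]=[\wt^\square]$ the square $f_e$ carries weights $a=c=s_e$, $b=d=c_e$ with $ac+bd=s_e^2+c_e^2=1$, so the move of Figure~\ref{fig:squareintro}(a) fixes every weight. Next I would observe that the equations indexed by the square faces are vacuous: writing $f'_e$ for the square face of $\overline{G^\square}$ at $f_e$, the computation $\mu_s^*X_{f'}=X_f^{-1}$ above gives $X_{f'_e}(\mu([\wt]))=X_{f_e}([\wt])^{-1}$, while $f'_e$ bounds the same region as $f_e$, so color change gives $X_{f'_e}([\overline{\wt}])=X_{f_e}([\wt])^{-1}$ as well. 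Hence the coordinates $X_{f_e}$ are unconstrained on $\mathcal F$. Since $X_{f_e}=ac/(bd)=s_e^2/c_e^2=1/\sinh^2(2J_e)$ runs over $\Rpos$ bijectively with $J_e$, the projection $\rho([\wt]):=(X_{f_e}([\wt]))_{e\in E(G)}$ restricts to a bijection $\mathcal I_G(\Rpos)\to(\Rpos)^{E(G)}$.

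The remaining task, and the crux, is to prove that $\rho|_{\mathcal F}$ is injective; this finishes the argument, because then any $[\wt]\in\mathcal F$ shares its $\rho$-value with a unique Ising point, which itself lies in $\mathcal F$, forcing $[\wt]$ to equal that point and giving $\mathcal F=\mathcal I_G(\Rpos)$. Injectivity means that on $\mathcal F$ the coordinates other than the $X_{f_e}$ are determined by the $X_{f_e}$, and these come from the equations attached to the non-square faces (those around the vertices and faces of $G$) and to the loops $a,b$. Expanding such an equation via (\ref{eq:fmut}), (\ref{eq:mu}) and $X_{\overline\gamma}([\overline{\wt}])=X_\gamma([\wt])^{-1}$ should produce a relation in which a single new coordinate appears alongside the $X_{f_e}$ and previously treated coordinates; because all quantities are positive, such a relation has a unique positive solution for that coordinate. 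Ordering the faces so that the relations form a triangular system and solving in cascade would then express every remaining coordinate as a function of $(X_{f_e})_e$, yielding the injectivity.

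The hard part will be exactly this cascade. Carrying it out requires (i) evaluating all the intersection numbers $\langle f,f_e\rangle$ from Figure~\ref{fig:eps}, where trivalence of $G^\square$ makes each a sum of terms $\pm\tfrac12$; (ii) identifying, for each non-square face $f'$ of $\overline{G^\square}$, the class $s_*^{-1}(f')$ and its color-reversed partner, which by (\ref{eq:fmut}) differ only by multiples of the square classes; and (iii) checking that the triangular order closes, with no residual relation among the free coordinates $X_{f_e}$. I expect the bookkeeping to simplify by treating the faces around vertices of $G$ and the faces around faces of $G$ separately, and by using the identification $X_{f_e}=s_e^2/c_e^2$ with the Ising edge parameter, so that the solved values are forced to coincide with those of the genuine Ising weight $[\wt^\square]$ having the same square coordinates; that final matching is simply the easy-direction computation run backwards.
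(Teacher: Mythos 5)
Your overall architecture is the same as the paper's: establish the easy inclusion $\mathcal I_G(\Rpos)\subseteq\mathcal F:=\{[\wt]:\mu([\wt])=[\overline\wt]\}$, observe that the square-face coordinates sweep out $(\Rpos)^{E(G)}$ bijectively on $\mathcal I_G(\Rpos)$, and then reduce everything to showing that a point of $\mathcal F$ is determined by its square-face coordinates (this determination statement is exactly the paper's Lemma~\ref{lem::faces_determine}, and your final "matching" step is exactly how the paper deduces Theorem~\ref{thm:main1} from it). However, at that crux you do not give a proof: you propose a cascade/triangular elimination through a basis of $H_1(\overline{G^\square},\Z)$, say the relation for each face "should produce" an equation with one new unknown, and explicitly defer the bookkeeping (intersection numbers, identification of $s_*^{-1}(f')$, closure of the triangular order) as "the hard part." As written, this is a genuine gap: the injectivity of the square-face coordinate map on $\mathcal F$ is asserted, not established.

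What you are missing is that no cascade is needed, because the two sides of the defining equation can be evaluated against the \emph{same} cycle, making every relation self-contained. For any cycle $\gamma$ in $G^\square$, equation~(\ref{eq:fmut}) gives $\overline\gamma = s_*(\gamma)+\sum_f a_f\, s_*(f)$ with the sum over square faces, and since distinct square faces are disjoint one has $\langle f,f'\rangle=0$ between them; so by~(\ref{eq:mu}),
\begin{equation*}
X_{\overline\gamma}(\mu([\wt]))
= X_\gamma([\wt])\prod_{\text{square faces }f} X_f([\wt])^{a_f}\bigl(1+X_f([\wt])\bigr)^{-\langle\gamma,f\rangle}.
\end{equation*}
On the other hand, color change gives $X_{\overline\gamma}([\overline\wt])=X_\gamma([\wt])^{-1}$, with the \emph{same} coordinate $X_\gamma$ appearing. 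Equating the two yields
\begin{equation*}
X_\gamma([\wt])^{2}=\prod_{\text{square faces }f} X_f([\wt])^{-a_f}\bigl(1+X_f([\wt])\bigr)^{\langle\gamma,f\rangle},
\end{equation*}
so each relation involves exactly one non-square unknown, namely $X_\gamma$ itself, quadratically, with the right-hand side an explicit positive function of the square-face coordinates. Positivity selects the unique square root, and determination of $[\wt]$ by $(X_f)_{\text{square faces}}$ follows for all $\gamma$ simultaneously — the system is diagonal, not merely triangular, so your concerns (i)--(iii) (ordering the faces, identifying $s_*^{-1}(f')$, absence of residual relations) evaporate. With this observation inserted in place of the cascade, your argument becomes a complete proof and coincides with the paper's.
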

The proof of Theorem~\ref{thm:main1} relies on the following lemma.
\begin{lemma} \label{lem::faces_determine}
	Suppose $[\wt] \in \mathcal X_{G^\square}(\Rpos)$ is such that 
	$
	\mu([\wt])=[\overline{\wt}].	
	$
	Then $[\wt]$ is determined by $(X_f([\wt]))_{\text{square faces }f}$.
\end{lemma}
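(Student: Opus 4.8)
The plan is to convert the hypothesis $\mu([\wt]) = [\overline{\wt}]$ into an explicit system of equations on the gauge-invariant coordinates and then read off that every coordinate other than the square-face coordinates is forced. First I would record what the hypothesis says in coordinates. Identifying $H_1(\overline{G^\square},\Z)$ with $H_1(G^\square,\Z)$ (the two graphs have the same underlying edge set), the color-change computation of the previous section gives $X_{\gamma}([\overline{\wt}]) = X_\gamma([\wt])^{-1}$ for every cycle $\gamma$ (the same computation that produces \eqref{eq:nus}, now applied to an arbitrary cycle rather than a zig-zag path). Hence the hypothesis is equivalent to
\begin{equation}
	X_\gamma(\mu([\wt])) = X_\gamma([\wt])^{-1}\qquad\text{for all }\gamma\in H_1(G^\square,\Z). \tag{$\star$}
\end{equation}

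Next I would compute the combinatorial map $s_*$ induced by performing the square move at every square face at once. Distinct square faces of $G^\square$ share no edge, so these moves commute and their intersection pairings against one another vanish; consequently \eqref{eq:fmut} globalizes additively, giving $s_*(q) = -q$ for a square face $q$ and $s_*(f) = f + \sum_{q}\max(0,\langle f,q\rangle)\,q$ for a non-square face $f$, the sum running over square faces $q$ adjacent to $f$. Because $s_*(q) = -q$, the identical formula computes $s_*^{-1}$. Substituting $\eta := s_*^{-1}(f)$ into the mutation rule \eqref{eq:mu}, and using $\langle q',q\rangle = 0$ for all square faces (so that $\langle \eta, q\rangle = \langle f,q\rangle$), I obtain for each non-square face $f$
\[
	X_f(\mu([\wt])) = X_f\prod_{q}X_q^{\max(0,\langle f,q\rangle)}(1+X_q)^{-\langle f,q\rangle}.
\]
Imposing $(\star)$ then collapses this to
\[
	X_f^2 = \prod_{q}X_q^{-\max(0,\langle f,q\rangle)}(1+X_q)^{\langle f,q\rangle},
\]
a positive rational function of the square-face coordinates alone. (For a square face itself $(\star)$ holds automatically, reflecting the easy direction of Theorem~\ref{thm:main1}, and so contributes nothing new.)

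The same argument handles two loops $a,b$ generating $H_1(\T,\Z)$: since $s_*$ preserves $H_1(\T,\Z)$-classes and alters a cycle only by boundaries of square faces, one has $s_*(a) = a + \sum_q d_q\,q$ for integers $d_q$, whence $(\star)$ forces $X_a^2 = \prod_q X_q^{-d_q}(1+X_q)^{\langle a,q\rangle}$, and likewise for $b$. Now take the coordinate system \eqref{h1:basis} with the omitted face chosen to be a non-square face, so that all square-face coordinates $\{X_q\}$ lie in the basis. Every remaining basis coordinate—each non-square $X_f$ and each of $X_a,X_b$—is the unique positive square root of a rational function of the $X_q$, the positivity of $\mathcal X_{G^\square}(\Rpos)$ removing the sign ambiguity. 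Therefore $[\wt]$ is determined by $(X_f([\wt]))_{\text{square faces }f}$, as claimed.

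I expect the main obstacle to be the exact determination of $s_*$ (equivalently $s_*^{-1}$) on faces and on the loops $a,b$: pinning down the intersection numbers $\langle f,q\rangle$ from the local picture at each square face, verifying that simultaneously performing all square moves really does globalize \eqref{eq:fmut} additively (which rests on the non-adjacency of square faces), and keeping the orientation and sign conventions of the color-change identification consistent throughout. Once $s_*$ is in hand, the equations $(\star)$ reduce to the displayed square-root identities and positivity of the edge weights closes the argument.
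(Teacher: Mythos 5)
Your proof is correct and takes essentially the same route as the paper: both turn the hypothesis into the identity $X_{\overline\gamma}(\mu([\wt])) = X_\gamma([\wt])^{-1}$, combine the color-change inversion with the face-mutation formula \eqref{eq:fmut} and the mutation rule \eqref{eq:mu} to conclude that each $X_\gamma^2$ is a positive rational expression in the square-face coordinates, and finish with the positive square root. The only difference is bookkeeping: the paper argues for an arbitrary cycle $\gamma$ with unspecified integer coefficients $a_f$, so it never needs your explicit computation of $s_*$ and $s_*^{-1}$ (including the $\max(0,\langle f,q\rangle)$ coefficients and the commutation of moves at disjoint squares), which is correct but more than the lemma requires.
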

\begin{proof}

Let $\gamma$ be a cycle in $G^\square$. Then we have $X_{\overline \gamma} ([\overline{\wt}]) = X_{\gamma}([\wt])^{-1}$. On the other hand, by (\ref{eq:fmut}), we have
\begin{equation}\label{eq:sgamam}
	\overline \gamma = s_*(\gamma) + \sum_{\text{square faces }f} a_f s_*(f),
\end{equation}
for some $a_f \in \Z$. Using (\ref{eq:mu}), we get
\[
 X_{ \overline \gamma} (\mu([\wt]))= X_\gamma([\wt]) \prod_{\text{square faces }f} X_f([\wt])^{a_f}(1+X_f([\wt]))^{-\langle \gamma,f \rangle} .	
\] 
Since $\mu([\wt])=[\overline{\wt}]$, we have
\begin{equation}\label{eq:Xpos}
	X_{\gamma}([\wt])^{2}=\prod_{\text{square faces }f} X_f([\wt])^{-a_f}(1+X_f([\wt]))^{\langle \gamma,f \rangle}.
\end{equation}

Therefore, every $X_{\gamma}^2$ is a function of $(X_f)_{\text{square faces }f}$. For positive weights, there is a natural square root, so $X_\gamma$ is determined by $(X_f)_{\text{square faces }f}$.
\end{proof}

\begin{figure}
	\begin{tikzpicture}[scale=0.5]

		 \begin{scope}[shift={(0,0)},rotate=0]
\def\ep{0.2}
			\fill[black!5] (0,0) rectangle (8,8);
			\draw[gray,dashed] (0,0) rectangle (8,8);

			\coordinate[bvert] (b1) at (1,5);
			\coordinate[bvert] (b2) at (1+2,5+2);
			\coordinate[bvert] (b3) at (5,3);
			\coordinate[bvert] (b4) at (7,1);
  
			\coordinate[wvert] (w1) at (1,7);
			\coordinate[wvert] (w2) at (1+2,5);
			\coordinate[wvert] (w3) at (5,1);
			\coordinate[wvert] (w4) at (7,3);
  
			\draw[] (b1) -- (0,4) 
			(b1) -- (w2)
			(b1) --(w1);
  
			\draw[] (b2) -- (w1)
			(b2) -- (4,8)
			(b2) -- (w2);
  
			\draw[] (b3) -- (w3)
			(b3) -- (w2)
			(b3) -- (w4) ;
  
			\draw[] (b4) -- (w3)
			(b4) -- (8,0)
			(b4) -- (w4) ;
  
			\draw[] (w4) -- (8,4)
			(0,8) -- (w1)
			(4,0) -- (w3);

			\draw[red,->,line width=\lw,rounded corners=\rc] (0,4) -- (1+\ep,5-\ep) -- (3-\ep,5-\ep) -- (5+\ep,3+\ep) --(7-\ep,3+\ep)-- (8,4);

			\draw[blue,->,line width=\lw,rounded corners=\rc] (4,0) -- (5-\ep,1+\ep) -- (5-\ep,3-\ep) -- (3+\ep,5+\ep) --(3+\ep,7-\ep)-- (4,8); 

\node[](f1) at (2,6) {$f_2$};
\node[](f2) at (6,2) {$f_1$};
\node[](f3) at (2,2) {$f_3$};
\node[](f4) at (6,6) {$f_4$};

\node[](a) at (4,8.5) {$b$};
\node[](b) at (8.5,4) {$a$};

			\node[](no) at (4,-1) {(a)};

		   \end{scope}

		   \begin{scope}[shift={(10,0)},rotate=0]
			\def\ep{0.2}
						\fill[black!5] (0,0) rectangle (8,8);
						\draw[gray,dashed] (0,0) rectangle (8,8);

						\coordinate[wvert] (b1) at (1,5);
						\coordinate[wvert] (b2) at (1+2,5+2);
						\coordinate[wvert] (b3) at (5,3);
						\coordinate[wvert] (b4) at (7,1);
			  
						\coordinate[bvert] (w1) at (1,7);
						\coordinate[bvert] (w2) at (1+2,5);
						\coordinate[bvert] (w3) at (5,1);
						\coordinate[bvert] (w4) at (7,3);
			  
						\draw[] (b1) -- (0,4) 
						(b1) -- (w2)
						(b1) --(w1);
			  
						\draw[] (b2) -- (w1)
						(b2) -- (4,8)
						(b2) -- (w2);
			  
						\draw[] (b3) -- (w3)
						(b3) -- (w2)
						(b3) -- (w4) ;
			  
						\draw[] (b4) -- (w3)
						(b4) -- (8,0)
						(b4) -- (w4) ;
			  
						\draw[] (w4) -- (8,4)
						(0,8) -- (w1)
						(4,0) -- (w3);
			
						\draw[red,->,line width=\lw,rounded corners=\rc] (0,4) -- (1+\ep,5-\ep) -- (1+\ep,7-\ep)--(3-\ep,7-\ep) -- (3-\ep,5-\ep) -- (5+\ep,3+\ep) --(7-\ep,3+\ep)-- (8,4);

						\draw[blue,->,line width=\lw,rounded corners=\rc] (4,0) --(5-\ep,1+\ep) -- (7-\ep,1+\ep)--(7-\ep,3-\ep)--  (5-\ep,3-\ep) -- (3+\ep,5+\ep) --(3+\ep,7-\ep)-- (4,8); 
			
			\node[](f1) at (2,6) {$\overline{f_2}$};
			\node[](f2) at (6,2) {$\overline{f_1}$};
			\node[](f3) at (2,2) {$\overline{f_3}$};
			\node[](f4) at (6,6) {$\overline{f_4}$};
			
			\node[](a) at (4,8.5) {$s_*(b)$};
			\node[](b) at (9,4) {$s_*(a)$};

						\node[](no) at (4,-1) {(b)};

					   \end{scope}

	  \end{tikzpicture}
	  \caption{(a) A basis for $H_1(G^\square,\Z)$ and (b) the cycles corresponding to $a$ and $b$ in $\overline{G^\square}$. From (b), we see that $\overline a=s_*(a)-\overline{f_2}= s_*(a) + s_*(f_2)$ and similarly the equation for $\overline b$. 
	  } \label{fig:ising_torus_check}
\end{figure}
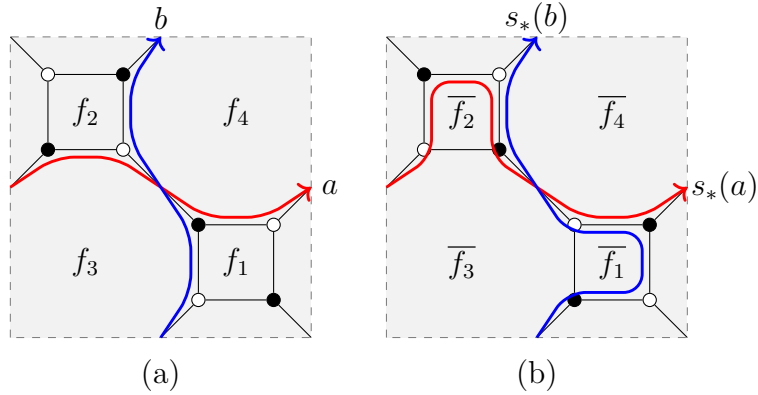

\begin{example}
	Consider the basis $\{f_1,f_2,f_3,a,b\}$ for $H_1(G^\square,\Z)$ shown in Figure~\ref{fig:ising_torus_check}(a). Then the equations (\ref{eq:sgamam}) are (see Figure~\ref{fig:ising_torus_check}(b))
\begin{align*}
	\overline{f_1} &= s_*(f_1) -2 s_*(f_1), & 
	\overline{f_2} &= s_*(f_2)  -2 s_*(f_2),\\
	\overline{f_3} &= s_*(f_3) +2 s_*(f_1) + 2 s_*(f_2), &
	\overline{a} &= s_*(a) + s_*(f_2),\\
	\overline{b} &= s_*(b) - s_*(f_1),
\end{align*}
and the equations (\ref{eq:Xpos}) are 
\begin{align}\label{eq:X2}
	X_{f_1}^2 &=X_{f_1}^2, & X_{f_2}^2&=X_{f_2}^2,\nonumber\\
	X_{f_3}^2&= X_{f_1}^{-2} X_{f_2}^{-2} (1+X_{f_1})^{2}(1+X_{f_2})^{2}, & X_a^2&=X_{f_2}(1+X_{f_1})^{-1}(1+X_{f_2})^{-1},\nonumber\\
	X_b^2&=X_{f_1}^{-1}(1+X_{f_1})(1+X_{f_2}).
\end{align}
Theorem~\ref{thm:main1} says that the subset $\mathcal I_{G}(\Rpos) \subset \mathcal X_{G^\square}(\Rpos)$ is defined by these five equations (two of which are trivial). Using the weights in Figure~\ref{fig:isingintro}(c), we compute
\begin{align*}
	X_{f_1} =\frac{s_1^2}{c_1^2},\quad X_{f_2}=\frac{s_2^2}{c_2^2},\quad
	X_{f_3}= \frac{1}{s_1^2s_2^2},\quad  X_a=c_1 s_2,\quad
	X_b=\frac{1}{c_2s_1}.
\end{align*}
Plugging into (\ref{eq:X2}), we see that all the equations are satisfied.
\end{example}

\begin{proof}[Proof of Theorem~\ref{thm:main1}]
	Let $[\wt]$ be a weight such that $\mu([\wt]) = [\overline{\wt}]$. For an edge $e \in E(G)$, let $f$ denote the corresponding square in $G^\square$. Consider the Ising weight $\wt^\square$ on $G^\square$ with $s_e := \sqrt{ \frac{X_f}{1+X_f}}$ and $c_e := \sqrt{\frac{1}{1+X_f}}$. Then $[\wt^\square]$ also satisfies 
	\[
	\mu([\wt^\square]) = [\overline{\wt^\square}].	
	\]
Since $X_f([\wt^\square])=X_f([\wt])$ for all square faces $f$, $[\wt^\square]=[\wt]$ by Lemma~\ref{lem::faces_determine}.

\end{proof}

\section{Ising spectral data}\label{sec:isingspecdata}

The goal of this section is to describe the subset of spectral data that corresponds to Ising models, i.e., the image of 
\[
	\mathcal I_G(\Rpos) \hookrightarrow \mathcal X_{G^\square}(\Rpos) \ra \mathcal S_{G^\square}.	
\]

	\begin{theorem} \label{thm:main}
		Let $G$ be a minimal graph in $\T$, $G^\square$ the corresponding bipartite graph, ${\bw}$ a white vertex of $G^\square$ and $[\kappa]$ a Kasteleyn sign on $G^\square$. Suppose ${\b}$ denotes the black vertex incident to $\bw$ that is not a part of the square containing $\bw$ and let $\alpha,\overline \alpha$ denote the zig-zag paths that contain the edge $\{\b,\bw\}$. Let $[\wt] \in \mathcal X_{G^\square}(\Rpos)$ and let $(C,D_{{\bw}},\nu_{G^\square}) = \lambda_{(G^\square,{\bw})} ([\wt],[\kappa])$. Then $ [\wt] \in \mathcal I_G(\Rpos)$ if and only if 
		\begin{enumerate}
			\item $C$ is invariant under the involution $\sigma:(z,w) \ra (z^{-1},w^{-1})$.
			\item $D_{{\bw}}$ is a standard divisor satisfying
			 \[
				D_{{\bw}} + \sigma(D_{{\bw}}) \sim K_C + \nu_{G^\square}(\alpha) +\nu_{G^\square}(\overline \alpha).
	\]
	Such divisors are in the Prym variety of $C$ (see e.g., \cite[Section 3]{MvM}). If $C$ is singular, we replace it with its desingularization $\tilde C$ as in Remark~\ref{rem:singular}.
			 \item For every zig-zag path $\alpha$ of $G^\square$ we have
			 \[
				\nu_{G^\square}(\overline \alpha) = \sigma(\nu_{G^\square}(\alpha)). 
				\]
		\end{enumerate}
	\end{theorem}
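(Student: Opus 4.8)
The plan is to deduce the theorem from Theorem~\ref{thm:main1}, which says $[\wt]\in\mathcal I_G(\Rpos)$ if and only if $\mu([\wt])=[\overline{\wt}]$, by pushing this weight identity through the spectral transform. The two sides of $\mu([\wt])=[\overline{\wt}]$ are controlled by the two functoriality statements already available: the color change $[\wt]\mapsto[\overline{\wt}]$ is governed by Proposition~\ref{prop:colorchange}, while the composite square move $\mu$ is governed by Theorem~\ref{thm:mutation_spec}. Comparing the resulting spectral data on $\overline{G^\square}$ will produce exactly conditions (1)--(3).

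First I would record the Kasteleyn-sign compatibility $\mu_s([\kappa])=[\overline{\kappa}]$, namely that applying square moves at all square faces carries $[\kappa]$ to the color-changed sign on $\overline{G^\square}$; this is a local check using the transformation rule of Figure~\ref{fig:kas} at each square, together with the fact that both are Kasteleyn signs and hence agree once they agree as elements of the $H^1(\T,\{\pm1\})$-torsor. Granting this, for a fixed $[\kappa]$ we have $(\mu([\wt]),\mu_s([\kappa]))=([\overline{\wt}],[\overline{\kappa}])$ precisely when $\mu([\wt])=[\overline{\wt}]$, and since the spectral transform is a bijection (Theorem~\ref{thm:komain}) this holds iff
\[
\lambda_{(\overline{G^\square},\overline{\bw})}(\mu([\wt]),\mu_s([\kappa]))=\lambda_{(\overline{G^\square},\overline{\bw})}([\overline{\wt}],[\overline{\kappa}]).
\]
By Theorem~\ref{thm:mutation_spec} applied to the sequence $\mu$ with base vertices $\bw$ and $\overline{\bw}$, the left-hand side equals $(C,D^{\mathrm{mut}}_{\overline{\bw}},\nu^{\mathrm{mut}})$ with $\nu^{\mathrm{mut}}(s_*\gamma)=\nu_{G^\square}(\gamma)$, and by Proposition~\ref{prop:colorchange} the right-hand side equals $(\sigma(C),\sigma(D_{\bw}),\nu^{\mathrm{cc}})$ with $\nu^{\mathrm{cc}}$ as in (\ref{eq:nus}). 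Thus $[\wt]\in\mathcal I_G(\Rpos)$ iff these three pieces of data coincide.

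Next I would match the three components. The curve component gives $C=\sigma(C)$, which is condition (1). For the zig-zag bijections I would use that the move bijection $s_*$ and the color-change bijection differ by orientation reversal, so that the color-changed zig-zag of $\overline{\gamma}$ equals $s_*(\gamma)$ (both have homology class $[\gamma]$); evaluating $\nu^{\mathrm{mut}}=\nu^{\mathrm{cc}}$ on $s_*(\gamma)$ and combining $\nu^{\mathrm{mut}}(s_*\gamma)=\nu_{G^\square}(\gamma)$ with (\ref{eq:nus}) yields $\nu_{G^\square}(\overline{\alpha})=\sigma(\nu_{G^\square}(\alpha))$ for all $\alpha$, which is condition (3). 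For the divisor component I obtain $D^{\mathrm{mut}}_{\overline{\bw}}=\sigma(D_{\bw})$, where $D_{\bw}$ is standard by Theorem~\ref{thm:komain}. Since $D^{\mathrm{mut}}_{\overline{\bw}}$ is the \emph{unique} effective degree-$g$ divisor satisfying the mutation relation of Theorem~\ref{thm:mutation_spec}(1)(b) for the oppositely colored pair $\bw$ (white) and $\overline{\bw}$ (black), this identity is equivalent to saying that $\sigma(D_{\bw})$ satisfies that relation, i.e. (up to the sign dictated by the colors)
\[
D_{\bw}+\sigma(D_{\bw})\sim K_C+\mathbf{d}_{G^\square}(\bw)-\mathbf{d}_{\overline{G^\square}}(\overline{\bw}).
\]
It then remains only to identify the discrete-Abel-map difference with $\nu_{G^\square}(\alpha)+\nu_{G^\square}(\overline{\alpha})$, giving condition (2); the singular case is reduced to $\tilde C$ as in Remark~\ref{rem:singular}.

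I expect this last identification to be the main obstacle: showing $\mathbf{d}_{G^\square}(\bw)-\mathbf{d}_{\overline{G^\square}}(\overline{\bw})\sim \nu_{G^\square}(\alpha)+\nu_{G^\square}(\overline{\alpha})$, where $\alpha,\overline{\alpha}$ are the zig-zag paths through the edge $\{\b,\bw\}$. The plan is to use the defining relation of the discrete Abel map on that edge, $\mathbf{d}_{G^\square}(\b)-\mathbf{d}_{G^\square}(\bw)=\nu_{G^\square}(\alpha)+\nu_{G^\square}(\overline{\alpha})$, together with the consistent normalization of $\mathbf{d}_{\overline{G^\square}}$ relative to $\mathbf{d}_{G^\square}$ furnished by Theorem~\ref{thm:mutation_spec}; concretely this step amounts to the assertion $D^{\mathrm{mut}}_{\overline{\bw}}=D_{\b}$, so that condition (2) becomes the equality $D_{\b}=\sigma(D_{\bw})$ verified in the worked example. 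The two secondary points, the sign identity $\mu_s([\kappa])=[\overline{\kappa}]$ and the relation between the $s_*$ and color-change bijections on zig-zag paths, are purely local and should follow by inspecting the gadget of Figure~\ref{fig:isingintro}(b) and Figure~\ref{fig:kas}; the genuine work is the global Abel-map computation above, which together with the bijectivity of the spectral transform turns the single weight identity of Theorem~\ref{thm:main1} into the three geometric conditions.
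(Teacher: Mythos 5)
Your overall architecture coincides with the paper's: Theorem~\ref{thm:main1} converts membership in $\mathcal I_G(\Rpos)$ into the weight identity $\mu([\wt])=[\overline{\wt}]$, injectivity of the spectral transform (Theorem~\ref{thm:komain}, plus $\mu([\kappa])=[\overline{\kappa}]$ from Figure~\ref{fig:kas}) converts that into an equality of spectral data on $\overline{G^\square}$, and the two sides are computed via Theorem~\ref{thm:mutation_spec} and Proposition~\ref{prop:colorchange}. The one structural difference is the base vertex: you compare spectral data at $\overline{\bw}$, the paper compares at $\overline{\b}$. That choice is not cosmetic, and it is exactly where your sketch has a genuine gap.

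The step you flag as the main obstacle cannot be closed the way you describe. You want $\mathbf{d}_{\overline{G^\square}}(\overline{\bw})-\mathbf{d}_{G^\square}(\bw)\sim\nu_{G^\square}(\alpha)+\nu_{G^\square}(\overline{\alpha})$ (note the sign: the Abel-map term in Theorem~\ref{thm:mutation_spec}(1)(b) is black minus white, the opposite of what you display), and you propose to get it from ``the consistent normalization of $\mathbf{d}_{\overline{G^\square}}$ relative to $\mathbf{d}_{G^\square}$.'' But that normalization only determines $\mathbf{d}_{\overline{G^\square}}$ at vertices present in \emph{both} graphs, and $\overline{\bw}$ is not one of them: the square moves destroy every black vertex of $G^\square$ and create new ones, so the black vertex $\overline{\bw}$ of $\overline{G^\square}$ is newly created. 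The surviving vertices are the whites of $G^\square$, and the combinatorial fact your sketch never states is that the surviving white $\bw$ is identified, under the move sequence, with $\overline{\b}$ and \emph{not} with $\overline{\bw}$. Getting this wrong is fatal: reading the normalization naively at $\overline{\bw}$ gives $\mathbf{d}_{\overline{G^\square}}(\overline{\bw})=\mathbf{d}_{G^\square}(\bw)$ and hence the relation $D_{\bw}+\sigma(D_{\bw})\sim K_C$, i.e.\ condition (2) without the two points at infinity, which is false. The correct chain is: $\mathbf{d}_{\overline{G^\square}}(\overline{\b})=\mathbf{d}_{G^\square}(\bw)$ by normalization at the surviving vertex; the two zig-zag paths of $\overline{G^\square}$ through the edge $\{\overline{\b},\overline{\bw}\}$ are $s_*(\alpha)$ and $s_*(\overline{\alpha})$ (a local check on the gadget); so the defining relation of the discrete Abel map on that edge, together with $\nu^{\mathrm{mut}}\circ s_*=\nu_{G^\square}$, gives $\mathbf{d}_{\overline{G^\square}}(\overline{\bw})-\mathbf{d}_{G^\square}(\bw)=\nu_{G^\square}(\alpha)+\nu_{G^\square}(\overline{\alpha})$, and uniqueness in Theorem~\ref{thm:mutation_spec}(1)(b) then yields your target $D^{\mathrm{mut}}_{\overline{\bw}}=D_{\b}$ and condition (2). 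The paper's choice of base vertex $\overline{\b}$ avoids this entirely: there the mutation side is literally $(C,D_{\bw},\nu_{G^\square}\circ s_*^{-1})$, and all Abel-map input is confined to the equivalence of condition (2) with $D_{\bw}=\sigma(D_{\b})$, proved inside $G^\square$ alone. Two smaller points: matching the move bijection and the color-change bijection on zig-zag paths by homology class alone is insufficient when a side of the Newton polygon has lattice length greater than one (several zig-zag paths then share a class), so that too needs a local argument; and the consistent normalization is part of the definition of $\mathbf{d}_{\overline{G^\square}}$ given before Theorem~\ref{thm:mutation_spec}, not a consequence of that theorem.
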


\begin{proof}
	Since $\mathbf{d}_{G^\square}(\b)-\mathbf{d}_{G^\square}(\bw) = \nu_{G^\square}(\alpha) +\nu_{G^\square}(\overline \alpha)$, by Theorem~\ref{thm:mutation_spec}(1)(b) the condition (2) is equivalent to 
	\[
	(2')~D_{{\bw}} = \sigma(D_{{\b}}).	
	\]
	Now we have
	\begin{align*}
		&[\wt] \in \mathcal I_{G}(\Rpos) &&\\& \Longleftrightarrow [\overline{\wt}] = \mu([\wt]) && \text{(Theorem~\ref{thm:main1})}\\
		& \Longleftrightarrow \lambda_{(\overline\Gamma,\overline{{\b}})}([\overline{\wt}],[\overline\kappa])= \lambda_{(\overline\Gamma,\overline{{\b}})}(\mu([\wt]),\mu([\kappa]))&& \text{(Theorem~\ref{thm:komain} and $\mu([\kappa])=[\overline \kappa]$ (Figure~\ref{fig:kas}))}\\
		& \Longleftrightarrow (\sigma(C), \sigma(D_{{\b}}), \sigma(\nu_{G^\square} (\overline \cdot)) ) = 	(C, D_{{\bw}},\nu_{G^\square}(\cdot) ) && \text{(Proposition~\ref{prop:colorchange} and Theorem~\ref{thm:mutation_spec})}\\
		& \Longleftrightarrow \text{(1), ($2'$) and (3)}. 
	\end{align*}
\end{proof}

	\bibliographystyle{alpha}
	\bibliography{biblio}
\end{document}